\newif\ifdouble
\newif\ifArX
\DeclareMathOperator*{\argmax}{arg\,max}
\begin{document}

\newtheorem{lemma}{Lemma}
\newtheorem{claim}{Claim}
\newtheorem{theorem}{Theorem}
\newtheorem{corollary}{Corollary}
\newtheorem{definition}{Definition}
\newtheorem{example}{Example}
\newtheorem{proposition}{Proposition}
\newtheorem{condition}{Condition}
\newtheorem{assumption}{Assumption}
\newtheorem{conjecture}{Conjecture}
\newtheorem{problem}{Problem}
\newtheorem{remark}{Remark}

\def\thelemma{\arabic{section}.\arabic{lemma}}
\def\thetheorem{\arabic{section}.\arabic{theorem}}
\def\thecorollary{\arabic{section}.\arabic{corollary}}
\def\thedefinition{\arabic{section}.\arabic{definition}}
\def\theexample{\arabic{section}.\arabic{example}}
\def\theproposition{\arabic{section}.\arabic{proposition}}
\def\thecondition{\arabic{section}.\arabic{condition}}
\def\theassumption{\arabic{section}.\arabic{assumption}}
\def\theconjecture{\arabic{section}.\arabic{conjecture}}
\def\theproblem{\arabic{section}.\arabic{problem}}
\def\theremark{\arabic{section}.\arabic{remark}}

\newcommand{\manualnames}[1]{
\def\thelemma{#1.\arabic{lemma}}
\def\thetheorem{#1.\arabic{theorem}}
\def\thecorollary{#1.\arabic{corollary}}
\def\thedefinition{#1.\arabic{definition}}
\def\theexample{#1.\arabic{example}}
\def\theproposition{#1.\arabic{proposition}}
\def\theassumption{#1.\arabic{assumption}}
\def\theremark{#1.\arabic{remark}}
}

\newcommand{\beginsec}{
\setcounter{lemma}{0}
\setcounter{theorem}{0}
\setcounter{corollary}{0}
\setcounter{definition}{0}
\setcounter{example}{0}
\setcounter{proposition}{0}
\setcounter{condition}{0}
\setcounter{assumption}{0}
\setcounter{conjecture}{0}
\setcounter{problem}{0}
\setcounter{remark}{0}
}
\newcommand{\la}{\lambda}
\newcommand{\eps}{\varepsilon}
\newcommand{\ph}{\varphi}
\newcommand{\vr}{\varrho}
\newcommand{\al}{\alpha}
\newcommand{\bet}{\beta}
\newcommand{\gam}{\gamma}
\newcommand{\kap}{\kappa}
\newcommand{\s}{\sigma}
\newcommand{\sig}{\sigma}
\newcommand{\om}{\omega}
\newcommand{\Gam}{\mathnormal{\Gamma}}
\newcommand{\off}[1]{}
\newcommand{\Del}{\mathnormal{\Delta}}
\newcommand{\Th}{\mathnormal{\Theta}}
\newcommand{\La}{\mathnormal{\Lambda}}
\newcommand{\X}{\mathnormal{\Xi}}
\newcommand{\PI}{\mathnormal{\Pi}}
\newcommand{\Sig}{\mathnormal{\Sigma}}
\newcommand{\Ups}{\mathnormal{\Upsilon}}
\newcommand{\Ph}{\mathnormal{\Phi}}
\newcommand{\Ps}{\mathnormal{\Psi}}
\newcommand{\Om}{\mathnormal{\Omega}}

\newcommand{\D}{{\mathbb D}}
\newcommand{\M}{{\mathbb M}}
\newcommand{\N}{{\mathbb N}}
\newcommand{\Q}{{\mathbb Q}}
\newcommand{\R}{{\mathbb R}}
\newcommand{\U}{{\mathbb U}}
\newcommand{\Z}{{\mathbb Z}}
\newcommand{\T}{{\mathbb T}}

\newcommand{\EE}{{\mathbb E}}
\newcommand{\FF}{{\mathbb F}}
\newcommand{\PP}{{\mathbb P}}
\newcommand{\ONE}{\boldsymbol{1}}

\newcommand{\calA}{{\cal A}}
\newcommand{\calB}{{\cal B}}
\newcommand{\calC}{{\cal C}}
\newcommand{\calD}{{\cal D}}
\newcommand{\calE}{{\cal E}}
\newcommand{\calF}{{\cal F}}
\newcommand{\calG}{{\cal G}}
\newcommand{\calH}{{\cal H}}
\newcommand{\calI}{{\cal I}}
\newcommand{\calJ}{{\cal J}}
\newcommand{\calL}{{\cal L}}
\newcommand{\calM}{{\cal M}}
\newcommand{\calN}{{\cal N}}
\newcommand{\calP}{{\cal P}}
\newcommand{\calR}{{\cal R}}
\newcommand{\calS}{{\cal S}}
\newcommand{\calT}{{\cal T}}
\newcommand{\calU}{{\cal U}}
\newcommand{\calV}{{\cal V}}
\newcommand{\calX}{{\cal X}}
\newcommand{\calY}{{\cal Y}}

\newcommand{\bI}{{\mathbf I}}
\newcommand{\bJ}{{\mathbf J}}
\newcommand{\bK}{{\mathbf K}}
\newcommand{\bT}{{\mathbf T}}

\newcommand{\scrA}{\mathscr{A}}
\newcommand{\scrM}{\mathscr{M}}
\newcommand{\scrS}{\mathscr{S}}
\newcommand{\scrU}{\mathscr{U}}
\newcommand{\scrI}{\mathscr{I}}
\newcommand{\scrP}{\mathscr{P}}
\newcommand{\scrR}{\mathscr{R}}

\newcommand{\frA}{\mathfrak{A}}
\newcommand{\frM}{\mathfrak{M}}
\newcommand{\frS}{\mathfrak{S}}

\renewcommand{\proof}{\noindent{\bf Proof.\ }}

\newcommand{\lan}{\langle}
\newcommand{\ran}{\rangle}
\newcommand{\uu}{\underline}
\newcommand{\oo}{\overline}
\newcommand{\skp}{\vspace{\baselineskip}}
\newcommand{\supp}{{\rm supp}}
\newcommand{\diag}{{\rm diag}}
\newcommand{\trace}{{\rm trace}}
\newcommand{\w}{\wedge}
\newcommand{\lt}{\left}
\newcommand{\rt}{\right}
\newcommand{\pl}{\partial}
\newcommand{\abs}[1]{\lvert#1\rvert}
\newcommand{\norm}[1]{\lVert#1\rVert}
\newcommand{\mean}[1]{\langle#1\rangle}
\newcommand{\To}{\Rightarrow}
\newcommand{\til}{\widetilde}
\newcommand{\wh}{\widehat}
\newcommand{\dist}{{\rm dist}}
\newcommand{\grad}{\nabla}
\newcommand{\iy}{\infty}
\newcommand{\AddedTh}[1]{\textbf{\textcolor{Black}{#1}}}
\newcommand{\shifrin}[1]{{#1}}
\newcommand{\FMO}[1]{\textbf{\textcolor{red}{***FIXME OMER #1 ***}}}

\newcommand{\be}{\begin{equation}}
\newcommand{\ee}{\end{equation}}

\newcommand{\tab}{\hspace*{0.3in}}
\newcommand{\Tab}{\hspace*{1.0in}}
\newcommand{\no}{\nonumber}
\newcommand{\noi}{\noindent}
\newcommand{\txt}{\textrm}
\newcommand{\ds}{\displaystyle}
\newcommand{\RR}{\mathbb{R}}
\newcommand{\vf}{\varphi}
\newcommand{\del}{\frac{\partial}{\partial t}}
\newcommand{\gr}{\color{gr}}
\newcommand{\vp}{\varepsilon}
\newcommand{\E}{{\mathbb E}}
\newcommand{\MS}[1]{{{#1}}}
\newcommand{\Omer}[1]{{\textcolor{red}{#1}}}
\newcommand{\Asaf}[1]{{\textcolor{red}{#1}}}
\newcommand{\Olga}[1]{{\textcolor{red}{#1}}}

\title{Coded Retransmission in Wireless Networks Via Abstract MDPs: Theory and Algorithms\thanks{Parts of this work will appear at the IEEE International Symposium on Information Theory, ISIT 2015, Hong Kong.}}

\author{\IEEEauthorblockN{{\small{Mark Shifrin, Asaf Cohen, Olga Weisman, Omer Gurewitz}}}\\
\IEEEauthorblockA{{\small{Department of Communication Systems Engineering\\
Ben-Gurion University of the Negev, Beer-Sheva, 84105, Israel\\
Email: {shifrin@tx.technion.ac.il,coasaf@bgu.ac.il,weismano@post.bgu.ac.il,gurewitz@bgu.ac.il }}}}}

\maketitle
%
\begin{abstract}
Consider a transmission scheme with a single transmitter and multiple receivers over a faulty broadcast channel. For each receiver, the transmitter has a unique infinite stream of packets, and its goal is to deliver them at the highest throughput possible. While such \emph{multiple-unicast} models are unsolved in general, several \emph{network coding based schemes} were suggested. In such schemes, the transmitter can either send an uncoded packet, or a coded packet which is a function of a few packets. \MS{The packets sent can be received by the designated receiver} (with some probability) or heard and stored by other receivers. Two functional modes are considered; the first presumes that the storage time is unlimited, while in the second it is limited by a given Time to Expire (TTE) parameter.

We model the transmission process as an infinite-horizon Markov Decision Process (MDP). Since the large state space renders exact solutions computationally impractical, we introduce \emph{policy restricted} and \emph{induced} MDPs with significantly reduced state space, and prove that with proper reward function they have equal optimal value function (hence equal optimal throughput). We then derive a reinforcement learning algorithm, which learns the optimal policy for the induced MDP. This optimal strategy of the induced MDP, once applied to the policy restricted one, significantly improves over uncoded schemes. Next, we enhance the algorithm by means of analysis of the structural properties of the resulting reward functional. We demonstrate that our method scales well in the number of users, and automatically adapts to the packet loss rates, unknown in advance. In addition, the performance is compared to the recent bound by Wang, which assumes much stronger coding (e.g., intra-session and buffering of coded packets), yet is shown to be comparable.
\end{abstract}

\section{Introduction}

Typical wireless access architectures constitute a gateway, or an Access Point (AP), to which all nearby clients are connected by means of a wireless medium. Among the prominent examples for such architecture is the prevailing IEEE 802.11 or LTE infrastructure mode setting. The downlink traffic implied in such topology comprises an AP sending (usually independent) traffic streams to the corresponding users. Furthermore, common wireless standards incorporate reliability mechanisms in order to overcome the inherently poor qualities of the radio channel. For example, IEEE 802.11, like many other network protocols, attains reliability through retransmission.

Network coding \cite{ahlswede2000network} refers to the transmission of predefined functions (usually a linear combination) of packets in order to achieve higher throughput, error correction and better security. Wireless communication, and in particular the transmission over the wireless channel which is broadcast in nature hence can potentially be heard by non-addressees of the dedicated stream is a natural platform for network coding. Nonetheless, in order for such a mechanism to be effective, the overhearing users need to store the relevant parts of the traffic streams even when they are not the intended addressee.

In this work, we address the aforementioned scenario of a single AP sending unicast streams to $K$ corresponding listeners. We assume that all streams are fully backlogged, i.e., there is a packet pending for each receiver at all times (infinite horizon). We also assume a typical stop-and-wait ARQ (automatic repeat-request) mechanism, similar to the one adopted by IEEE 802.11 standard. In such schemes, a sender sends one frame at a time, where each frame is sent repeatedly until the sender receives an acknowledgment (ACK) frame from the receiver. That is, the next packet to some user will be transmitted only after the previous packet to that user was received correctly. We adopt the decoding and data storage pattern known in literature as instantly decodable network coding~\cite{sorour2012densifying}, specifically, each user stores packets even if not destined to it, yet \textit{only uncoded packets are stored at the receivers while coded combinations are discarded}. We assume that the data stored at the listeners is known to the AP at all times; this can be achieved by each receiver piggybacking a list of its current stored packets not destined to it, on the user's upstream traffic (each DATA or ACK sent by the user to the AP).

Using network coding at the AP, the challenge in each downstream transmission to is determine whether to send an ordinary unicast packet to one of the intended receivers, or to send a linear combinations of packets. Note that even under this seemingly moderate setup, in which users store only uncoded packets, and the AP has at most a single packet pending per user at a time, since each user can potentially store a packet for each other user (i.e., $2^{K-1}$ possibilities per user, \MS{where $K$ stands for the number of users}), the number of different options for stored packets before each transmission-opportunity (termed the \textit{state space}) is enormous ($2^{(K-1)\cdot K}$). Consequently, no efficient solution optimal in the general case exists~\cite{dougherty2006nonreversibility}.

In this paper, we design a computationally feasible, scalable and robust methodology which effectively addresses the aforementioned problem. Furthermore, in addition to the generic problem described above, we also consider a more complicated setup \textit{in which the storage time of packets at the receivers is limited} by a Time to Expire (TTE) constraint, i.e., a packet that its storage time has expired, is invalidated and discarded. We present a theoretical framework and a model-based learning implementation which allow us to acquire the on-line transmission and retransmission policy under such channel conditions. In particular, we address three specific challenges. First, the fundamental challenge of network coding - deciding what is the most effective linear combination of the data to be transmitted. This problem becomes further complicated, once TTE constraints are introduced. Second, in contrast to most known works, our model presumes \textit{infinite data streams} for all listeners, rather than limiting the amount of data to a fixed block. Finally, the encoding decisions are made in an environment without prior knowledge of the packet loss probability. As we elaborate in the related work section, previous works in the area mainly considered various optimization problems for multicast transmissions and/or finite horizons (finite block length). However, this is the first work to address all these challenges in a unified framework.

Our main contributions are as follows: we model the transmission process by a Markov decision processes (MDP). Since the original state space is intractable, we utilize state aggregation. State aggregation (sometimes referred to as state abstraction) is a technique to partition the state space such that all states belonging to the same partition subset are aggregated into one meta-state, such that the same policy applies to all states in the meta-state. 
In contrast to a complex exhaustive search to find the optimal aggregation, we \textit{force} a state aggregation, based on proved coding concepts. 
We further introduce a policy restricted MDP and an induced MDP which undergoes a dramatic state space reduction, and show that in case one chooses the appropriate reward function for the induced MDP, the overall reward of both processes will be equal. Specifically, instead of keeping track of all possible packets (coded and uncoded), we only keep track of two state variables: (i) The size of the maximal group of users in which each member of the group has a packet destined to each other user in the group but its own (i.e., maximal clique; accordingly, in the sequel we will refer to any set of users each having packets of all other users as a clique, and the maximal such set the maximal clique).
Note that for each clique, a single coded packet which linearly combines all the packets destined to the users in the clique can be sent, and each user receiving the coded packet can decode its own packet. 
(ii) The number of users whose packets are not stored by any other user.  
Note that this abstraction allows us to significantly reduce the state space from $O(2^{K^2})$ to $O(K^2)$.
Consequently, we also restrict the action space, such that the only allowed actions are transmitting a packet to one of the users currently not having its packet backlogged at any other user, or transmitting a coded packet to the maximal clique. Hence, we name the MDP which only allows restricted actions based on the aggregation a \emph{policy restricted} MDP, and the MDP which sees only aggregated states an \emph{induced} MDP.

Given the transition probabilities, 
the optimal policy can be read off the Bellman equation for the induced MDP,
which has a relatively small state space and thus can be efficiently solved.
However, since the transition probabilities are hard to calculate, we \textit{learn} them using a model-based learning algorithm.
Namely, we derive a novel on-line explore and exploit learning algorithm, which iterates between the learning phase and the Bellman equation solution phase in our problem. Hence, we achieve the \emph{optimal policy}, which, in turn, results in the optimal throughput (under the constraints imposed by the aggregation and state reduction). Note that this approach is independent of the channel conditions, and works equally effectively for any packet loss, including when the packet loss is not stable and fluctuates around some value. We also study the \emph{structural properties} of the value function, and use these properties to both gain deep understanding on the behavior of optimal policies and accelerate the reinforcement learning (RL) procedure. Specifically, we prove that under mild conditions, there exists a "threshold type policy", namely as a function of the maximal clique size, there is only one transition from one optimal action to the other, and once sending a clique is optimal, it continues to be optimal for the larger cliques. 
We show that our algorithm is both computationally tractable and scalable. At the same time, its performance is comparable to the upper bounds in~\cite{wang2012capacity}, which are given for a \emph{much stronger coding scheme}, including intra-session coding, much larger state space and buffers, and no TTE.

We incorporate the TTE constraint within the aforementioned MDP model and propose two types of state aggregations. We compare our algorithms with known algorithms in the literature via extensive simulations.

\subsection{Related work}
\noindent\textbf{Network coding.}
While the problem of NC has been widely treated in the multicast setting, multiple unicast still provides a rich ground for ongoing research. Coded retransmissions were considered in~\cite{sorour2011adaptive}, where, after sending a finite set of packets to all users and receiving acknowledgements, coded retransmissions are calculated and sent in order to complete the missing packets. Hence, this is a \emph{finite horizon} problem, where a block is sent only when the previous one is completely decoded.~\cite{sorour2012densifying} continued the above work, seeking to maximize the coding opportunities. Similar to our problem, in ~\cite{sorour2012densifying} users cannot store coded packets. However, ~\cite{sorour2012densifying} fits a \emph{multicast scenario} rather than multiple unicast. Moreover, the graph required to identify cliques in ~\cite{sorour2012densifying} grows with the stream size, while it is fixed in our scheme. Finite streams and clique structures were also addressed in~\cite{nguyen2009wireless}. Additional strategies for finite streams can be found in \cite{keller2008online},~\cite{xiao2008wireless} and~\cite{costa2008informed}.

In~\cite{eryilmaz2006delay}, the objective was to \emph{minimize the delay} using random linear NC. Random NC was also applied for mesh networks in~\cite{lin2010slideor}. The finite horizon work~\cite{parastoo2010optimal} minimized the delay by linear programming. Network coding for multi-hop wireless network was addressed in~\cite{rayanchu2008loss}. To the best of our knowledge, no previous work analytically treated the setting \MS{where the storage time of the side information was limited by some parameter (TTE)}. Practical insights on storage time constraints and imperfect acknowledge delivery are given in~\cite{katti2006xors}. We also mention the MDP based approach for perfect feedback~\cite{nguyen2007multimedia} and partially observable MDP for uncertain feedback~\cite{nguyen2009network}. Both works, however, are for finite horizon and do not include state aggregation. Thus, the problem of scalability of the solutions with the size of the stream is raised.

Recently, the seminal work in~\cite{wang2012capacity} gave codes and bounds for the erasure broadcast channel. The coding strategy therein was proved optimal for up to 3 users, and bounds were given for general $K$ (two users were considered earlier in~\cite{georgiadis2009broadcast}). The coding scheme therein assumed more than one packet per user can be coded and overheard (intra-session coding), while we only allow transmitting the first packet per session. Furthermore, the model in~\cite{wang2012capacity} allows storing coded packets, at the price of larger buffers and state space, while our model assumes instantly decodable codes. Nevertheless, we use the theoretical upper bound in~\cite{wang2012capacity} to evaluate the performance of the schemes suggested herein, and find them comparable despite the much simpler coding in this work. Note also that calculating the regions in~\cite{wang2012capacity} is exponentially complex in $K$, while the algorithms suggested herein scale well with the number of users.

To conclude, none of the aforementioned works addressed the problem of multiple unicast with infinite horizon addressed in this paper. Reference~\cite{cohen2013coded} attempted to provide heuristic algorithms for a small number of users, yet the algorithms therein show inferior performance compared to the learning-based solutions suggested in this work. In addition,~\cite{cohen2013coded} did not consider the channel condition, while our approach is adjustable to the packet loss uncertainty.

Random linear network coding (RLNC), (e.g.,~\cite{dougherty2005insufficiency}) is used only across flows (only inter-flow coding), then, regardless of the filed size used, such a coding scheme will effectively require \emph{all receivers  to decode all the data}, which is highly inefficient. Increasing the field size will only increase the probability that a sent packet is independent of the previously sent ones, but would still require each receiver to wait for a full rank on all the data in the system. Moreover, RLNC requires receivers to cache coded packets as well. Indeed, it is well known in the coding literature that RLNC is optimal for multicast (all receivers requiring all the information), yet highly inefficient for multiple unicast, which is the problem at hand.

Finally, note that the Wang's bound discussed and depicted in section~\ref{sec:Impl}, allows for the most general coding schemes, including larger window size, buffering of coded packets, intra-flow coding and high field sizes. Thus, our results are compared to the most general (and computationally expensive) coding scheme, and show good performance.

\noindent\textbf{Index Coding and ARQ.}\label{sec:ICc}
The relation between NC and Index Coding (IC)~\cite{birk2006coding} was formulated in~\cite{el2010index}. The most general formulation of the IC problem constitutes a setting of K nodes, each having a set of packets as side information and expecting an optionally distinct set of packets. At the beginning of the communication, all the data is at the base station, and the goal is to find a transmission strategy to satisfy all demands. Therefore, this is, in essence, a finite horizon problem. Of course, similar to previous works, IC, in general, allows for complex coding over all packets in the block and storing of coded packets at the receivers before decoding. In addition, reference~\cite{daidata} treated IC with side information which includes \emph{coded} packets as well.
\MS{Note that we do not use the classical formulation of these problems since we do not address decoding of finite blocks but view the infinite horizon view of the problem.}

Minimization of the overall transmission time was addressed in~\cite{lucani2009network}. The policy described in~\cite{lucani2009network}, if considered on a per-node basis, results in a greedy algorithm, maximizing the information gained from a single transmission. In the MDP-based approach herein, however, the transmission policy accounts for the \emph{ability to transit to more rewarding states in the future}, hence generalizes the greedy approach. 

Index coding in a scenario where each packet should be transmitted to all was compared to an ARQ scheme in~\cite{ghaderi2008reliability}. It was shown that as the number of users K grows, the number of transmissions with NC is constant, while it is logarithmic in K in the case of ARQ. ARQ schemes were also analyzed in~\cite{sundararajan2008arq} and implemented in~\cite{sundararajan2009feedback}, where the authors considered a broadcast network and the queue size at the sender side as the primary performance metric. As for unicast scenarios, the finite horizon scheme~\cite{el2010index} optimized the number of decoding operations, rather than the number of transmissions.

It is important to note that there are a few critical differences between the state of the art in index coding and the coding scheme suggested in the paper. First, index coding considers only finite horizon scenarios, i.e., each receiver is interested in a fixed, finite list of packets, and one has to devise, before communication starts, the best coding scheme in terms of minimizing the number of packets required to satisfy all demands. In our problem, users have \emph{infinite streams}, the state of the system (in terms of the side information available) changes after each transmission, and one have to make coding decisions \emph{after each transmission}.  Second, the state of the art index codes are not instantly decodable, namely, receivers might need to wait for the end of the block to decode their data. The scheme herein is instantly decodable.  Finally, index coding allows the receivers' demands to partially overlap, hence is more general in this sense. Yet, it is well known to be a hard problem (e.g. ~\cite{chaudhry2011complementary}), with no efficient solutions in the general case. Thus, it is beneficial to consider different settings, in which high gains can be efficiently achieved.

\noindent\textbf{State aggregation.}
As a road-map paper for the state aggregation methods see \cite{li2006towards}. This work defined $5$ abstraction methods, where the most relevant to our setting is $\pi^*$-abstraction. 
We partially adopt their definitions of aggregated and detailed (ground) states and the corresponding abstraction function.
$\pi^*$-abstraction can be suboptimal compared to the original MDP~\cite{jong2005state}.
However, our approach is different from~\cite{li2006towards}, since we do not attempt to perform a search to find the aggregation which would preserve optimality, but rather, based on key principles in coding and re-transmission, define a robust MDP abstraction, in order to acquire the smallest states space and action space.
\MS{An adaptive aggregation for the average reward MDP was presented in~\cite{ortner2013adaptive}. 
In this work, the aggregation is generic and partition into aggregated states is being updated in the process of the algorithm run. 
However, it is not clear how to predict the number of states in such an aggregation once the algorithm achieved the desired optimality bound. Our aggregation is fixed and predefined in order to \textit{specifically suit for the given communication problem}. 
Hence, both the aggregation and the state-space size we employ are predefined and result in a much simpler RL algorithm, at expense of optimality guarantees. }
\MS{Another survey work on abstraction, in the context of reinforcement learning is~\cite{ponsen2010abstraction}. State aggregation for continuous MDP is brought in~\cite{ferns2011bisimulation}.
The authors in~\cite{kearns2002sparse} proposed a near-optimal reinforcement learning algorithm aiming to asymptotically achieve the optimality of the original MDP. However, running time demands needed to achieve the desired optimality gap are not feasible for our purpose.}

\section{Model description}
\label{sec:model}
We consider a downlink wireless model, with one transmitter (access-point) and $K$ receivers. At the sender, we assume an infinite stream of packets for each user (i.e., unicast traffic). We assume a Stop-and-Wait based protocol, accordingly, even though the sender has an infinite set of packets per receiver, we assume only one such packet is active at a given time per receiver, i.e., the sender does not transmit new packets for a receiver until the active one is received correctly and acknowledged. Note that this mechanism conforms to the widely deployed IEEE 802.11 protocol suite. Our channel model assumes the packet sent at each slot is received at receiver $k$ with probability $p_k$, independently of the other receivers and of the previously received packets (memoryless independent users). The packet loss probabilities are assumed to be fixed in time. We assume that uncoded packets correctly received by a receiver which is not the intended one, are cached. Note that, on top of the coding scheme we suggest, of-the-shelve error correction codes can be utilized in order to improve $p_k$ at the expense of overhead.

We assume that packets overheard by undesignated users can be stored for future use. Yet, we assume that only uncoded packets can be stored at the receivers while coded or corrupted packets are discarded. We distinguish between two cases, unlimited storage time and limited storage time. We first treat the case where the stored packets are never outdated (i.e. storage time is unlimited). Denote by ${\bf M}$ the space of $K\times K$ binary matrices, where each $s\in \bf M$ represents a possible state. In particular, each line $i\in\{1,\cdots,K\}$ constitutes a vector of indicators such that $s_{ij}=1$ if and only if user $j$ has a packet designated for user $i$. We assume the AP always aware of the data kept by the receivers using status updates sent by each receiver. We assume that when a receiver overhears or decodes a packet destined to another, it is able to store it. The state of the system is updated after every transmission slot. At transmission slot $t$ the state is represented by $s(t)\in M$. In the case that user $k$ successfully decodes its packet, $s_{k,i}=0,\forall i$ is set. Setting the entire row $k$ to zero is motivated by the simple reasoning that users that stored the packet prior to the successful transmission can now discard it. The sender can now send the next packet for that user. In the case that  the destination fails to receive its packet, we set $s_{k,k'} = 1$ if the packet is heard by user $k'$ and $s_{k,k'} = 0$, $k \ne k'$, otherwise.

Next, we consider the limited storage time for which the time a packet can be stored at each receiver's buffer; we denote the number of time slots a packet can be stored by Time to Expire (TTE). Accordingly, a packet overheard by a non-intended receiver and which is stored for more than its maximal validation time is invalidated and discarded. For simplicity, we assume a system of identical users, i.e., all packets have a similar TTE \textit{limit} which we denote by $T$, i.e., the maximal time a packet can be stored is $T$ time slots. Respectively, each transmitted packet has a TTE associated with it. This value is updated every time slot, until the packet is correctly decoded or outdated and dropped. We denote the TTE of a stored packet, at some given time slot, as $\tau\in\{1,\cdots,T\}$ and by $\tau=0$ the case that no valid packet is stored. Every time slot, for every packet stored by a user, $\tau$ is decremented by $1$. Once $\tau$ becomes $0$, the corresponding packet is outdated and dropped.

We denote by ${\bf M^{TTE}}$ the space of $K\times K$  matrices, where each $s\in{\bf M^{TTE}}$ represents a matrix of TTE values associated with undecoded packets held by the receivers. In particular, each line $i\in\{1,\cdots,K\}$ constitutes a vector of TTE parameters, such that $s_{ij}=\tau$, if and only if user $j$ has a packet destined to user $i$, and there are $\tau$ time slots left till the packet expires. Similarly to the scenario without TTE constraint, we assume that the AP is always aware of what data is kept by which receivers. Whenever the intended receiver fails to receive its packet, the AP sets $s_{k,k'} = T$ if the packet is either heard by user $k'$, or user $k'$ already has this packet stored, and sets $s_{k,k'} = 0, \; k \ne k'$, otherwise. Hence, all users that overheard some packet have an equal value stored for its current TTE. This value is stored at the AP and is used for the transmission decisions.

Each packet is represented as $m$ symbols over the field ${\mathbb F}_{2^k} $. Thus, its payload consists of $mk$ bits. Now, each time a packet is sent, the sender has a few options as to which type of packet to send. These "options" constitute its action space. Specifically, it can either choose a single packet from the stream intended to a specific user, and send that packet to that  user (termed uncoded packet), or, alternatively, it can code together a few packets. In this work, we used the standard linear network coding~\cite{}, however, since nodes do not store coded packets, and we require instant decodability, coding is done over the binary field. Thus, at every transmission slot, the AP encodes
\be\label{z}
z=\alpha_1d_{1}\oplus \alpha_2d_{2}\oplus,\cdots,\oplus\alpha_kd_{k}
\ee
\noindent and sends this packet, where for each $k$, $\alpha_k\in\{0,1\}$, $d_i$ denotes the packet currently expected by user $i$ and $\oplus$ denotes bitwise XOR.
Namely, the AP decides on coefficients $\alpha_k\in\{0,1\}$, where $\alpha_k=1$ means a packet for user $k$ participates in the current coded transmission slot. Otherwise, $\alpha_k=0$. Note that choosing $\alpha_k=1$ for only one user is equivalent to transmitting an uncoded dedicated packet to user $k$. Hence, the action space is of size $2^k-1$, and it includes all possibilities of uncoded and coded packets (excluding the zero packet).
Recall that as previously explained, only such uncoded packets can be stored by undesignated receivers. Note that packets to be combined (coded) are assumed to have the same size (if not, the shorter ones are padded with trailing $0$s).

The setting described above can be seen as a framework including a state-space, an action-space which comprises the possible packet combinations the AP can send at any given time slot (denoting the action at transmission slot $t$ by $a(t)$) and the transition probabilities. Due to the Markov property, we deduce that the problem can be formulated as an MDP, with the objective to maximize the transmission throughput.
Hence, we define an appropriate stochastic reward $r(s(t+1),a(t),s(t))$, associated with transitioning from state $s(t)$ to state $s(t+1)$ after taking the action $a(t)$, such that positive reward is accumulated for each successfully decoded packet. For example, if a coded packet of $n$ packets is sent, and $m\leq n$ of them are successfully decoded by their intended receivers, we have $r(s(t+1),a(t),s(t))=m$. Failing to decode gives no reward. Storing a packet at the receiver which is not the addressee gives no reward. However, note that it may increase the \emph{potential} number of packets decoded in the future (that is, transition to a state with a higher potential value).

We assume that the same transmission effort is required by the AP whether it transmits an uncoded packet, a coded one or does not transmit at all, i.e., fixed transmission costs are assumed. Consequently, abstention from sending a packet at any transmission slot is the worst option possible. Hence, at each time slot exactly one packet is sent. The objective is to find a policy which maximizes the attained throughput, which is measured in \emph{$\big(\frac{\text{packets decoded}}{\text{time-slots}}\big)$}.

In the next section, we bring the technical definition of the MDP and state aggregation, in order to utilize it for the described model. For general definitions and theory of MDP the reader is referred to~\cite{bertsekas2005dynamicI}.

\section{MDP with restricted action space and induced MDP}
\label{sec:induced}
In this section, we introduce the general notation which lays the ground for the state aggregation. We follow the concepts of abstract MDPs in~\cite{li2006towards}, yet adjust our notation and forthcoming analysis to fit our model and results throughout the rest of the paper.

As previously mentioned the problem can be formulated as a finite MDP.
\off{Let us denote the ground MDP by $\scrM_0$, characterized by the five tuple $\scrM_0= \langle \scrS,\scrR,\scrP,\scrA,\gamma \rangle $, where $\scrS$ is the finite state-space, we term every state $s \in \scrS$ as a {\it detailed} state, since it includes a detailed account of which packets were received by which users; $\scrR$ is a bounded reward function; $\scrP$ are transition probabilities and $\scrA$ is a finite set of actions called the action space. We consider both long run average cost and discounted cost with $0 \leq \gamma < 1$ is a discount factor on the summed sequence of rewards. 	 	} 	
Let us denote the ground MDP by $\scrM_0$, characterized by the five tuple $\langle \scrS,\scrA,\scrP,\scrR,\gamma \rangle $, where $\scrS$ is the finite state-space, in which we term every state $s\in\scrS$ as a {\it detailed} state, since it includes a detailed account of system; $\scrA$ is a finite set of actions called the action space, $\scrP$ are transition probabilities with $p(s'|s,a)$ denoting the probability to proceed to state $s'$, being in state $s$ and acting with action $a$, $\scrR$ is a bounded reward function with $r(s',a,s)$ denoting the expected immediate reward gained by taking action $a$ in state $s$ and proceeding to state $s'$. We consider both long run average cost and discounted cost with $0 \leq \gamma < 1$ being a discount factor.  
A {\it policy} is a mapping from states to actions ($\scrS \mapsto \scrA$). In this paper we will focus only on policies that do not depend on the time (stationary policies). We denote the set of all admissible policies by $\scrU$.
We denote by $p(s'|s,a)$ the probability to proceed to state $s'$, being in state $s$ and acting with action $a$, and by $r(s',a,s)$ stochastic reward function attained from such instance.
The action in some state $s$ is denoted by $a(s)$. We further denote by $r(s,a)=\sum_{s'}r(s',a,s)p(s'|s,a)$ the average reward of being in state $s$ and taking action $a$. 
As previously mentioned we consider two performance criteria: discounted infinite horizon cost and long run average cost. Specifically, the discounted infinite horizon cost associated with a given policy $\pi$ and initial state $s_0$ is given by
\[
J^{\pi}(s_0)=\E\big[\sum_{t=0}^{\infty}{\gamma^t}r(s'_{t+1},a^{\pi}_{t},s_{t})|s_0\big]
\]
where $s_t$ and $a^{\pi}_{t}$ denote the state visited at time slot $t$ and action taken on time slot $t$ based on state $s_t$ and according to policy $\pi_{AC}$. The long run average cost associated with policy $\pi_a$ is
\begin{equation}\label{ind1}
J^{\pi_{AC}}=\lim_{N\to\infty}\frac{1}{N}\E\big[\sum_{t=0}^{N}r(s'_{t+1},a^{\pi}_{t},s_{t})\big].
\end{equation}

Note that the initial state has no impact on the long run average cost (Eq.~\eqref{ind1}) as its effect is dissolved over time (\cite{bertsekas2005dynamicI}). In this section, we only refer to the discounted case. We examine the average case in Section~\ref{sec:vstudy} and in the appendices. The value function for the discounted case is given by {\small{$V(s_0)=\sup_{\pi\in\scrU}J^{\pi}(s_0)$}}.

We now define the restricted and induced MDPs, which allow us to work with much simpler MDPs in our communication problem, yet retain the notion of network coding hence the near-optimal performance.

The policy restricted MDP is stimulated by the state aggregation we suggest. State aggregation exploits properties present in the state space of the basic MDP (the detailed states) for aggregation of multiple detailed states into one aggregated state obtaining an MDP with smaller state space. In particular, a partition $\bar \scrS = \{ \bar s_1, . . . , \bar s_n \}$ of the detail state space may serve as an aggregated state space if each detailed state is mapped to one and only one aggregated state ($\bigcup_{i=1}^{n} \bar s_i =\scrS  \; ; \; \bar s_i \bigcap \bar s_j = \emptyset$).
\off{Furthermore, under state aggregation, the same action should be applied to all detailed states which are mapped to the same aggregated state for any admissible policy.
}  
 We now formally define the Policy Restricted MDP.

\begin{definition}
\label{def:Policy_Restricted_MDP}
A {\it policy restricted} MDP denoted by ${\scrM_1}=\scrP(\scrM_0,\phi,\bar\scrA)$, is defined by

\noindent(I) A mapping  $\phi$ acting on $\scrS$, such that $\phi:\scrS\mapsto\bar\scrS$, where $\bar \scrS=\bigcup_{i}\bar s_i$ for disjoint $\bar s_i$,
\newline(II) A restricted action space $\bar\scrA\in\scrA$, and
\newline(III) A restricted set of policies $\bar\scrU\in\scrU$, such that for all $\bar\pi\in\bar\scrU$, it holds $\bar\pi(s)\in\bar\scrA$, $\forall s\in\scrS$ and if $\phi(s_1)=\phi(s_2)$ then $\bar a^{\bar\pi}(s_1)=\bar a^{\bar\pi} (s_2)$, where $\bar a^{\bar\pi}(s_1)=\bar\pi(s_1)$, and $\bar a^{\bar\pi}(s_2)=\bar\pi(s_2)$.  
\end{definition}

In other words, we define a mapping rule $\phi(s)$ which associates each detailed state with an aggregated state, partitioning the state space ($\scrS$) into the aggregated state space ($\bar\scrS$). In correspondence to the aggregated state space, only policies that enforce the same action for all states belonging to the same aggregated state are admissible, i.e., the same action should be taken for all $s_i \in \bar s_i $. \MS{We will use the notation $s\in\bar s$ if it holds $\phi(s)=\bar s$, and $\bar\pi(\bar s)$ as the  equivalent to $\bar\pi(\phi(s))$. }

\off{consider adding a comment regarding Ortner: Note that in contrast to other studies which constraint the aggregated states to have identical or similar cost and transition probabilities (e.g., \cite{}), in this study the cost and the transition probabilities between states which are aggregated to a single meta state can be quite different.
}  

Note that the policy restricted MDP is still based on the detailed state-space and thus is difficult to calculate. Accordingly, we define the induced MDP to which the detailed states are transparent. The induced MDP is formed by the atomic states, induced by the aforementioned aggregated states, hence, relies on significantly smaller state space, and has similar action rules. By means of the aggregated state space and the corresponding policy restriction space, one can define transition probabilities as follows:
Given an admissible policy $\bar \pi\in\bar\scrU$, the transition probabilities between the aggregated states which we denote by $p(\bar s'|\bar s,\bar a)$, are:

\begin{equation}\label{eq:ind1}
\begin{aligned}
p(\bar s'|\bar s,\bar a)= & \sum_{s'\in\bar s'} \sum_{s''}p(s'|s'',\bar s,\bar a)p^{\bar \pi}(s''|\bar s,\bar a)= \\
&=\sum_{s'\in\bar s'} \sum_{s''}p(s'|s'',\bar a)p^{\bar \pi}(s''|\bar s)= \sum_{s'\in\bar s'}\sum_{s''}p(s'|s'',\bar a)p^{\bar \pi}(s''|\bar s)\\
\end{aligned}
\end{equation}


\off{
Then, with respect to this policy, the probabilities are calculated as follows:
First, $p(\bar s|s,\bar a)=\sum_{s'\in\bar s}p(s'| s,\bar a)$.
Second, observe that
\begin{equation}\label{eq:p9}
 p(s'|\bar s,\bar a)=
\sum_{s''}p(s'|s'',\bar s,\bar a)p_{\bar a}(s''|\bar s,\bar a)=\sum_{s''}p(s'|s'',\bar a)p_{\bar a}(s''|\bar s),
\end{equation}
\begin{equation}\label{eq:p10}
p(\bar s'|\bar s,\bar a)=\sum_{s'\in\bar s'}\sum_{s''}p(s'|s'',\bar a)p_{\bar a}(s''|\bar s).
\end{equation}
} 

Where $p^{\bar \pi}(s''|\bar s)$ denotes the stationary probability of being in the detailed state $s'' \in \scrS $, conditioned on the aggregated state $\bar s$. Obviously, these probabilities may depend on the policy $\bar\pi\in\bar\scrU$, hence the superscript $\bar\pi$; yet, for simplicity in the sequel, when clear from the context, we will omit the superscript. Clearly, $\sum_{s''\in\bar s}p^{\bar \pi}(s''|\bar s)=1$.
\MS{Define the cost of the policy restricted MDP as follows:
{\small{
		\(
 J^{\bar\pi}(s_0)=\E\big[\sum_{t=0}^{\infty}{\gamma^t}r(s'_{t+1},\bar a^{\bar\pi}_{t},s_{t})|s_0\big] 
\)
}}.
The corresponding value function is given by $V_{\bar\scrU}(s_0)=\sup_{\bar\pi\in\bar\scrU}J^{\bar\pi}(s_0)$.
Since policy restricted MDP sees the detailed states we also define $J^{\bar\pi}(\bar s_0)=\sum_{s_0\in\bar s_0}J^{\bar\pi}(s_0)p^{\bar \pi}(s_0|\bar s_0)$ and $V_{\bar\scrU}(\bar s_0)=\sup_{\bar\pi\in\bar\scrU}J^{\bar\pi}(\bar s_0)$}.

Next we formally define the induced MDP:

\off{Consequently, based on the aggregated state space and the corresponding policy restriction space, one can define a probability space which includes only the aggregated states and the corresponding transition probabilities $p(\bar s'|\bar s,\bar a)$, yet, these transition probabilities rely on the underline detailed states. In line with this note that in contrast to other studies which constrain the aggregated states to have identical or similar cost and transition probabilities (e.g., \cite{}), in this paper there are no such constraints and the cost and the transition probabilities between aggregated states can be different. Note that the policy restricted MDP is still based on the detailed state-space and thus cannot be efficiently computed. Hence we present the induced MDP which significantly reduced state space. Specifically, the induced MDP is formed by the atomic states, \textit{induced} by the aforementioned aggregated states, and has similar action rules. We next define the {\it induced} MDP.}

\off{For example, assume $s \in \bar\scrS $, $s',s'' \in \bar\scrS' $ and $s''' \in \bar\scrS'' $ the probability of moving between states $\bar\scrS \rightarrow \bar\scrS' \rightarrow \scrS' $ taken actions $a_t$ and $a_{t+1}$ depends on the detailed states and in particular on the intermediate state hence can be $p(s'|s,a_t) \times p(s'''|s',a_{t+1})$ or it can be $p(s''|s,a_t) \times p(s'''|s'',a_{t+1})$, which can be quite different. Accordingly, next we define the {\it induced} MDP to which the detailed states are transparent.
} 

\begin{definition}
\label{def:Induced_MDP}
MDP $\hat{\scrM}=\scrI(\scrM_0,\phi,\hat\scrA)$ is induced by policy restricted $\scrM_1$ on $\scrM_0$, if

\noindent(I) Each state $\hat s\in\hat{S}$ uniquely relates to some $\bar s\in\bar{S}$;
Denote this relation as $\hat s\sim\bar s$. 
\newline(II) For all $\hat s\sim\bar s$, the actions $\hat a(\hat s)$ available in $\hat s$ are equivalent to $\bar a(\bar s)$. Denote the relation of the action space as $\hat A\sim\bar A$, \MS{and relation of the actions as $\hat a\sim\bar a$. }
\newline(III) The transition probabilities are defined on similar probability space and comply with $p(\hat s'|\hat s,\hat a)=p(\bar s'|\bar s,\bar a)$, for all $\hat s',\hat s,\hat a$,
for which $\hat s\sim\bar s$. 
\end{definition}
\ifArX
Note that an induced MDP sees no detailed states. \MS{That is}, each state of the induced MDP stands for distinct aggregation of detailed states in a policy restricted MDP. Note that if one takes a sequence of detailed states $\{s_0,s_1,s_2,\cdots\}$ and applies $\phi$ to it, the resulting sequence $\{\phi(s_0),\phi(s_1),\phi(s_2),\cdots\}$ is not necessarily Markovian. This is because $\phi$ is non-injective surjective function. That is, it is not a bijection for the reason the injective property does not hold. However, as we show in the sequel, one can construct transition probabilities from $\phi(s_i)$ to $\phi(s_j)$, i.e. the aggregated states, such that the resulting process is Markovian. \MS{As far as the problem of coded retransmission is concerned, the state space is reduced from $\scrS=2^{K(K-1)}$ to $\bar\scrS$, where the size of the latter is determined by the properties of the aforementioned mapping $\phi$.}
\else
Note that an induced MDP sees no detailed states. \MS{That is}, each state of the induced MDP stands for distinct aggregation of detailed states in a policy restricted MDP.
\fi
Denote $\hat\scrU$ defined over $\hat\scrA$.

The discounted infinite horizon cost associated with some policy $\hat\pi\in\hat\scrU$ is given by
$
J^{\hat\pi}(\hat s_0)=\E\big[\sum_{t=0}^{\infty}\gamma^t\hat r(\hat{s'}_{t+1},a^{\hat\pi}_{t},\hat{s}_{t})|\hat s_0\big]$. The corresponding value function is given by $V_{\hat\scrU}(\hat s_0)=\sup_{\hat\pi\in\hat\scrU}J^{\hat\pi}(\hat s_0)$.

We aim to set the appropriate reward function for the induced MDP such that its value function will be comparable to that of the policy restricted one.
The relation between $\scrI(\scrM_0,\phi,\hat A)$ and $\scrP(\scrM_0,\phi,\bar A)$ is given by the following proposition:
\begin{proposition}
For an MDP $\scrM_0(\scrS,\scrA,\scrP,\scrR,\gamma)$, a policy restricted MDP $\scrM_1(\scrS,\bar\scrA,\scrP,\bar \scrR,\gamma)$ such that ${\scrM_1}=\scrP(\scrM_0,\phi,\bar \scrA)$, and an induced MDP $\hat\scrM(\hat \scrS,\hat \scrA,\hat \scrP,\hat \scrR,\gamma)$, where $\hat{\scrM}=\scrI(\scrM_0,\phi,\hat \scrA)$, with given initial states $\hat s_0\sim\bar s_0$, there exists a reward function $\hat \scrR$, such that $V_{\hat\scrU}(\hat s_0)=V_{\bar\scrU}(\bar s_0)$. 
\label{prop1}
\end{proposition}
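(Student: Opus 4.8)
The plan is to exhibit the reward $\hat\scrR$ explicitly as the stationary--conditional average of the detailed rewards, and then prove the equality of the two value functions policy by policy before passing to the supremum. First I would set up the correspondence between admissible policies: by part (II) of Definition~\ref{def:Induced_MDP} the actions of $\hat\scrM$ and $\scrM_1$ are in bijection through $\hat a\sim\bar a$, and since every $\bar\pi\in\bar\scrU$ is constant on each aggregated class $\bar s$, the assignment $\bar\pi\mapsto\hat\pi$ defined by $\hat\pi(\hat s)\sim\bar\pi(\bar s)$ for $\hat s\sim\bar s$ is a bijection between $\bar\scrU$ and $\hat\scrU$. It therefore suffices to show $J^{\hat\pi}(\hat s_0)=J^{\bar\pi}(\bar s_0)$ for each corresponding pair, as the two suprema are then taken over matched sets.

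Fixing such a pair, I would define the induced reward by
\[
\hat r(\hat a,\hat s):=\sum_{s\in\bar s}p^{\bar\pi}(s|\bar s)\,r(s,\bar a),\qquad \hat s\sim\bar s,\ \hat a\sim\bar a,
\]
i.e. the average, over the stationary conditional distribution $p^{\bar\pi}(\cdot|\bar s)$, of the one-step average reward $r(s,\bar a)=\sum_{s'}r(s',\bar a,s)p(s'|s,\bar a)$ already introduced in Section~\ref{sec:induced}. With this choice each $\hat s\sim\bar s$ carries exactly the reward that $\scrM_1$ accrues within the class $\bar s$, measured against the same conditional law that the construction of the aggregated kernel in \eqref{eq:ind1} relies upon.

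The core of the argument is to rewrite the discounted return of $\scrM_1$ in aggregated coordinates. Starting from $J^{\bar\pi}(\bar s_0)=\sum_{s_0\in\bar s_0}p^{\bar\pi}(s_0|\bar s_0)J^{\bar\pi}(s_0)$ and interchanging the (absolutely convergent, since $0\le\gamma<1$ and $\scrR$ is bounded) sum over $t$ with the expectation, I would write the time-$t$ contribution as $\gamma^t\sum_{s_t}q_t(s_t)\,r(s_t,\bar\pi(\bar s_t))$, where $q_t$ is the law of $s_t$ started from $p^{\bar\pi}(\cdot|\bar s_0)$. The decisive step is the identity $q_t(s_t)=Q_t(\bar s_t)\,p^{\bar\pi}(s_t|\bar s_t)$, namely that the conditional law of the detailed state given its aggregate is, at every $t$, the stationary conditional $p^{\bar\pi}(\cdot|\bar s)$; granting this, the inner sum collapses to $\sum_{\bar s_t}Q_t(\bar s_t)\,\hat r(\bar\pi(\bar s_t),\bar s_t)$, and the marginals $Q_t$ evolve by precisely the kernel \eqref{eq:ind1}. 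Since part (III) of Definition~\ref{def:Induced_MDP} gives $p(\hat s'|\hat s,\hat a)=p(\bar s'|\bar s,\bar a)$, the sequence $(Q_t)$ coincides with the law of $\hat s_t$ under $\hat\pi$, whence $J^{\bar\pi}(\bar s_0)=\sum_t\gamma^t\E[\hat r(a^{\hat\pi}_t,\hat s_t)|\hat s_0]=J^{\hat\pi}(\hat s_0)$. Taking suprema over the matched policy sets yields $V_{\hat\scrU}(\hat s_0)=V_{\bar\scrU}(\bar s_0)$.

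The step I expect to be the genuine obstacle is exactly the identity $q_t(s_t)=Q_t(\bar s_t)\,p^{\bar\pi}(s_t|\bar s_t)$: that the within-class conditional distribution is preserved along the trajectory and equals the stationary one used to define \eqref{eq:ind1}. The discussion preceding the proposition already flags that $\{\phi(s_t)\}$ need not be Markov for a general surjection $\phi$, so this is precisely where the specific aggregation and the stationarity implicit in $p^{\bar\pi}(\cdot|\bar s)$ must be invoked; without it the aggregated kernel would depend on the entry distribution into $\bar s$ and the reduction would break down. A secondary point I would address is that $\hat r$ as defined carries a $\bar\pi$-dependence through $p^{\bar\pi}(\cdot|\bar s)$; this is harmless for the claimed equality of value functions, since the comparison is performed one policy at a time under the bijection, but I would remark that in the coding model of interest $r(s,\bar a)$ is in fact constant across $s\in\bar s$ for the restricted actions, so that $\hat r$ reduces to a genuinely policy-independent reward on $\hat\scrM$.
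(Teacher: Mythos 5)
Your construction is essentially the paper's: the induced reward is the $p^{\bar\pi}(\cdot\mid\bar s)$-weighted average of the detailed rewards (the paper parametrizes it as a transition reward $\hat r(\hat s',\hat a,\hat s)$ normalized by $p(\bar s'\mid\bar s,\bar a)$, which has the same one-step expectation as your state--action version), and the two value functions are then matched term by term in the discounted sum. The step you single out as the genuine obstacle --- that the within-class conditional law remains equal to $p^{\bar\pi}(\cdot\mid\bar s)$ along the trajectory so that the aggregated marginals evolve by the kernel of \eqref{eq:ind1} --- is precisely the step the paper's proof also leaves implicit in its final display, so your proposal is, if anything, more explicit about where the logical weight of the argument rests.
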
 \noindent See Appendix~\ref{app:proof} for the proof.

Intuitively, one sees that the reward of an induced MDP may be interpreted as the suitably weighted sum of the rewards of the corresponding policy restricted MDP, normalized by the sum of the weights.
Note that these weights are found by the transition probabilities to the detailed states which compose the corresponding destination aggregated state, $\bar s'$, for which the relation $\bar s'\sim\hat s'$ holds. The key point is that with the proper reward function, the induced MDP achieves {\it the same value function} as the restricted one. Note that since $\calU_1\subset\calU$, in general, we have $V_{\hat\scrU}(\hat s_0)=V_{\bar\scrU}(s_0)\leq V_{\scrU}(s_0)$.
\off{We emphasize the generality of our approach in this paper. Namely, the only parts that determine the aggregation method 
are the choice of the aggregation function $\phi$ and of the actions space formed by $\bar a(\phi(s))$, 
which allow to uniquely define $\scrP(\scrM_0,\phi,\bar \scrA)$ and $\scrI(\scrM_0,\phi,\hat \scrA)$.
\MS{Hence, the method of choice of $\phi$ is left open. The function should be (may be heuristically) constructed according to the specifics of the original problem.}
In the rest of the paper, for the sake of simplicity, we will omit the relation $\sim$, where the process it is related to is clear. Hence, in these cases we will refer to both induced and policy restricted MDPs with slight abuse of notation, by merely using $\bar\scrA$ and $\bar\scrS$ signs.
} 
\section{State aggregation and reinforcement learning based solution}
\label{sec:st-agg}
Having laid the ground, in this section we follow the notations and definitions described in Section~\ref{sec:induced} to provide the formal definition of the state aggregation and restricted policy for the \emph{communication problem} considered.  Specifically, we will base both the aggregated states and the action space on the clique size (which will be defined shortly) and on the number of empty lines in the state matrix; the rewards and transition probabilities of the induced MDP will be determined accordingly. 

\subsection{State aggregation and the restricted action space}
\label{subsec:State_aggregation}
In order to define the state aggregation and the restricted action space, let us first define a {\it clique} structure and associate it with clique transmission. We associate a directed graph $G(V,\Gamma)$, with each state $s \in S$, such that a vertex $v_j \in V$ is assigned to each user $j$ and a set of directed edges are formed between each user and the users it holds a packet to, i.e., $\Gamma(s)=\{e_{ij}=\{v_i,v_j\} | s(i,j)=1\}$. As commonly defined in graph theory, a clique is a subset of vertices such that each vertex is connected to each other vertex in the set, i.e., $ Q \; is \; a \; clique; iff \; \{ \forall v_{i}, v_{j} \in Q: s(i,j)=1,\forall j\neq i;\;\; i,j\in\{1,\cdots,a\}\}$ . The size of a clique is determined by the number of vertices it contains. Note that in the context of our problem any set of users forming such a clique ($\forall v_{i} \in Q$) implies that each user in the set has all the messages intended to all other users in the set. Accordingly, a coded message, composed of all the messages intended to all users in the set, can be sent, such that each user in the set can decode its own. Denote the size of the maximum clique induced by state $s$ by $L(s)$ and by $E(s)$ the number of empty lines in $s$.

We construct the aggregation such that each aggregated state is defined by the tuple $\{L(s),E(s)\}$, i.e., $\phi (s) = \{L(s),E(s)\}$. For clarification let us examine the following example:

\begin{example}
\label{ex:aggregated states}
Consider a communication network consisting of $5$ users. Observe the following states:
{\tiny\[
s_1 =
\begin{pmatrix}
0 & 1 & 0 & 0 & 1 \\
1 & 0 & 1 & 1 & 0 \\
0 & 1 & 0 & 1 & 0 \\
0 & 1 & 1 & 0 & 0\\
0 & 1 & 0 & 0 & 0
\end{pmatrix}
\quad
s_2 =
\begin{pmatrix}
0 & 1 & 1 & 0 & 1 \\
1 & 0 & 1 & 0 & 1 \\
1 & 1 & 0 & 0 & 0 \\
0 & 0 & 0 & 0 & 1\\
1 & 1 & 0 & 0 & 0
\end{pmatrix}
\]}
Note that $s_1$ contains a clique of size $3$ associated with users $2,3,4$ and a clique of size $2$ associated with users $1,2$. The state $s_2$ contains the $2$ cliques of size $3$ associated with users $1,2,3$ and users $1,2,5$. There are no empty lines in either state.
Since the suggested aggregation considers only the maximum clique size and the number of empty lines, both states above pertain to the same aggregated state denoted by $(3,0)$, i.e., $\phi(s_1)= \{L(s_1),E(s_1)\} =\{3,0\}$ and $ \phi(s_2) =\{L(s_2),E(s_2)\} =\{3,0\}$, i.e., $\phi(s_1)= \phi(s_2) =\{3,0\}$.
\end{example}
\ifArX
The additional detailed example can be found in Appendix~\ref{xmpl:1}.
\else
The additional detailed example can be found in~\cite{2015arXiv150202893S}.
\fi
Note that the number of possible states (i.e., number of unique pairs $\{ L(s),E(s)\}$) is dramatically reduced and is upper bounded by $J=(K+1)K$. Further note that while finding a maximum clique is hard in general, graphs resulting from the state matrix in our setting are random and have cliques of logarithmic size~\cite{cohen2013coded}, hence $L(s)$ can be found efficiently.

Having defined the state aggregation, we define the restricted action space. In particular, in accordance with the aggregated states we allow only two actions, sending a coded packet to the maximum clique which we denote by $\bar a=1$, or sending an uncoded packet corresponding to a randomly chosen empty line denoted by $\bar a=2$ ($\bar\scrA\in\{1,2\}$). Note that the restricted action space complies with the constraint that the same policy should be applied to all states in the same aggregated state. It is important to note that once an action is decided (according to the aggregated state), the actual combination depends on the detailed state, (i.e., to which user (users) to send an uncoded (coded) packet. 
In Example~\ref{ex:aggregated states}, since there are no empty lines, the only permissible action is to send a coded packet to the maximum clique, that is, sending $p_2\oplus p_3\oplus p_4$ for $s_1$ or one of $p_1\oplus p_2\oplus p_3$, $p_1\oplus p_2\oplus p_5$ for $s_2$. Note that in the case that there are no empty lines and the maximum clique size is one, the AP should send a coded packet to one of the maximum cliques, yet since the size of the maximum clique is equal to $1$, the coded packet comprises a single packet hence it is practically uncoded.

Obviously, the action space defined here is not the only plausible option. For example, one may define sending the empty line which has the greatest potential to increase the maximal clique. Moreover, in some cases sending an uncoded packet to a non-empty line might be a more valuable option. However, our approach is to choose a simple aggregation that even though not optimal, is clearly motivated by the original communication problem, hence is expected to attain good results. In addition, we aspire that the number of operations (e.g., determining the maximal clique or random selection of an empty line) which is required from the AP to perform (on the detailed states) will be minimal. 
The evaluation part (Section~\ref {sec:Impl}) confirms that even though our approach is not optimal it attains very good results.

\subsection{Finding the policy utilizing reinforcement learning}
\label{Subsec:Reinforcement_Learning}
In the previous subsection we have defined the state aggregation and the restricted action space. In order to complete the setup in this subsection we obtain the appropriate reward $\hat\scrR$ and the transition probabilities $p(\hat s|\hat s',\hat a)$, for the induced MDP.

There are three major obstacles in computing the transition probabilities and constructing the associated rewards according to Proposition~\ref{prop1}. First, the packet loss probabilities typically are not known to the AP. Second, in order to compute the transition probabilities one needs to go over each detailed state and compute the probability of going to each state for each possible action (it implies order of ($2^{K(K-1)} \times 2^{K(K-1)})$ action). Third, the transition probabilities are policy dependent, i.e., the transition probability of going from aggregated state $\bar s$ to aggregated state $\bar s'$ relies on the steady state probability of being in detailed state $s$ given that the system is in state $\bar s$ (see equation~\eqref{eq:ind1}). These probabilities are policy dependent. 
Recall that our objective is to determine the policy. 
Even though the first obstacle is relatively easy to resolve as the AP can keep a history record and if necessary send dedicated probe packets to estimate the packet loss on each outgoing link, the other difficulties are more challenging as obviously trying to compute the transition probabilities and the reward values is impractical. Accordingly, we utilize reinforcement learning (RL), an effective learning technique which has the capability of finding the reward maximizing policy, 
in discrete stochastic environments, without explicit specification of the transition probabilities.
\off{Specifically, in reinforcement learning the transition probabilities are accessed through a simulator (or real-world scenarios) for which typically the reinforcement learner (agent) chooses a random (or in some cases particular) initial state and iterates between the different states taking different actions and updating the transition probabilities and expected rewards accordingly. The policy is attained by utilizing a trial-and-error learning procedure in which the system tries different actions at different states trying to learn the optimal action for each state (e.g., \cite{}). }
Specifically, RL is based on a feedback loop in which the reinforcement agent (learner or AP in our case) selects an action based on its current state, gets feedback in the form of the next state and an associated reward, and updates the estimated records. The selection of the action is based on the current state $s$ and the temporary (current) policy, and balances exploration and exploitation, i.e., on the one hand the agent has to exploit what is already known, but on the other hand it has to explore in order to examine other options for making better action selections in the future. Accordingly, the agent must try a variety of actions and progressively favor those that appear to be best (e.g., \cite{sutton1998introduction}). 
One of the difficulties of our learning problem is expressed in highly differentiated \emph{access frequencies} among the various states. Accordingly, since the algorithm is expected to visit each state multiple times, we need to direct it and to force it to visit less visited states. Several RL algorithms that can be utilized to solve our problem exist, e.g., MBIE~\cite{strehl2008analysis}, ${\bf E^3}$~\cite{kearns2002near} and R-Max~\cite{brafman2003r}; each one has its own merits. Nonetheless, since our main concern is in the application itself, rather than trying to adopt one of the known algorithms, we derived a modified simple algorithm which suits best our problem.

The proposed algorithm iterates between two steps; the learning step and the policy improvement step. Specifically, we utilize a random policy (e.g., choose at random if to transmit a randomly chosen empty line, or to transmit to the maximum clique) for the learning. In each step, we apply the {\it temporary policy} which was found in the previous step. We utilize $\epsilon-greedy$ approach with the temporary policy (that is, choose the action according to the temporary policy with probability $1-\epsilon$, and choose a random action otherwise), for $N_k$ consecutive iterations (transmissions), recording the visited aggregated states and the attained rewards (the number of consecutive transmission can vary between steps, hence the subscript $k$). It is important to note that even though the system traverses the detailed states, only the aggregated states, the actions taken and the rewards attained are recorded. That is, the AP does not hold any record of the visited detailed states. Next, we {\it update the temporary policy}, utilizing the newly learned reward functions and transition probabilities obtained during the learning phase, by applying value iteration on the corresponding \emph{Bellman equation}, that is,
\ifdouble
\begin{align}\label{3}
& V(\hat s)=\max\Big\{E_{\hat s'}[r(\hat s',\hat a=1,\hat s)+\gamma V(\hat s')], \nonumber \\
&\qquad \qquad \qquad E_{\hat s'}[r(\hat s',\hat a=2,\hat s)+\gamma V(\hat s')]\Big\}.
\end{align}
\else
\begin{align}\label{3}
& V(\hat s)=\max\Big\{E_{\hat s'}[r(\hat s',\hat a=1,\hat s)+\gamma V(\hat s')], E_{\hat s'}[r(\hat s',\hat a=2,\hat s)+\gamma V(\hat s')]\Big\}.
\end{align}
\fi
This reinforcement learning procedure continues until sufficient convergence in $V(\hat s)$ or until the policy is unchanged. 
The outcome of the proposed algorithm is the optimal policy for the induced MDP and the nearly-optimal corresponding $V(\hat s)$. 

\ifdouble
\begin{figure}
\fbox{\begin{minipage}{0.45\textwidth}
{\small
{\it Algorithm A} \\{\it   \hspace*{1ex} Initialization}
\begin{enumerate}
  \item Initialize policy $\pi_{1}^0=\pi_R$.
  \item Run $\scrM_1$ with $\pi_{1}^0$ for $N_0$ transmissions. Each visit to $s'$, find $f_1(s')=\bar s'$. Use $\hat s'\sim\bar s'$ to update by sampling $\hat p(\hat s'|\hat a,\hat s)$ and $r(\hat s',\hat a,\hat s)$.
  \item Calculate $\pi_B^1$ by finding $\hat V_0$ from solution to the Bellman equation (performing value iteration)~\eqref{3}, using sampled probabilities and sampled rewards found in step $2$.
  \end{enumerate}}
  {\small
{\it At step $k>0$}
  \begin{enumerate}
    \item Update $\eps_k$ from predefined decreasing sequence of $\{\eps_k\}$. Set $N_k$.
    \item Set policy $\pi_{1}^k=\begin{cases}
    \pi_R\;\text{with p. $\eps_k$} \\\pi_B^k \;\text{with p. $1-\eps_k$}.
    \end{cases}$
    \item Run $\scrM_1$ with $\pi_{1}^k$ for $N_k$ transmissions. Each visit to $\hat s'$, update $\hat p(\hat s'|\hat a,\hat s)$ and $r(\hat s',\hat a,\hat s)$.
    \item Find $\hat V_k$ by value iteration over $\hat\scrM_2$. Retrieve the optimal policy $\pi_B^{k+1}$.
    \item If $|\hat J_k-\hat J_{k-1}|<\varepsilon$, for some predefined $\varepsilon$, then finish. Otherwise perform step $k+1$.
  \end{enumerate}}
\end{minipage}}
  \end{figure}
  \else
 \begin{figure}
 	\fbox{\begin{minipage}{0.95\textwidth}
 			{\small
 				{\it Algorithm A} \\{\it   \hspace*{1ex} Initialization}
 				\begin{enumerate}
 					\item Initialize policy $\pi_{1}^0=\pi_R$. Set $n(\hat s',\hat a,\hat s)=0$, $R(\hat s',\hat a,\hat s)=0$
 					\item Set $\pi_B^0=\pi_R$.
 				\end{enumerate}}
 				{\small
 					{\it At step $k\geq0$}
 					\begin{enumerate}
 						\item Update $\eps_k$ from predefined decreasing sequence of $\{\eps_k\}$. Set $N_k$.
 						\item Set policy $\pi_{1}^k=\begin{cases}
 						\pi_R\;\text{with probability $\eps_k$} \\\pi_B^k \;\text{with probability $1-\eps_k$}.
 						\end{cases}$
 						\item Run $\scrM_1$ with $\pi_{1}^k$ for $N_k$ transmissions. 	 \begin{enumerate}
 						\item Each visit to $\hat s$ acting $\hat a$ with reward $r'$ and going to $\hat s'$, \\set $n(\hat s',\hat a,\hat s)=n(\hat s',\hat a,\hat s)+1$, $R(\hat s',\hat a,\hat s)=R(\hat s',\hat a,\hat s)+r'$.
 							\end{enumerate}
 						\item  Calculate $\hat p(\hat s'|\hat a,\hat s)$ and $r(\hat s',\hat a,\hat s)$, from $n(\hat s',\hat a,\hat s)$, $R(\hat s',\hat a,\hat s)$ and $N_k$.
 						\item Find $\hat V_k$ by value iteration over $\hat\scrM_2$. Retrieve the optimal policy $\pi_B^{k+1}$.
 						\item If $|\hat J_k-\hat J_{k-1}|<\varepsilon$, for some predefined $\varepsilon$, then finish. Otherwise perform step $k+1$.
 					\end{enumerate}}
 				\end{minipage}
 			
 				}
 			\end{figure}
  \fi

A pseudo code of the algorithm is given in \emph{Algorithm A}. The algorithm starts with picking a random initial policy, denoted by $\pi_R$ (Initialization step in \emph{Algorithm A}). The random policy $\pi_R$ we implemented chooses between the possible actions with equal probability, namely, $\hat a=1$ or $\hat a=2$ with probability $1/2$ each, when the choice is feasible, where $1$ and $2$ stand for transmitting the maximal clique and the random empty line, correspondingly. After the Initialization step, the algorithm runs between two steps; the learning step and the policy improvement step which are repeated iteratively. At each step the algorithm starts with a least visited aggregated state (the detailed state within can be arbitrary), and starts traversing the states for $N_k$ consecutive transmissions, based on the $\epsilon-greedy$ policy (line 2). Obviously, only the restricted actions, i.e., transmitting an empty line or transmitting the maximum clique, are allowed. The parameter $\epsilon$ is updated at the beginning of each step (line 1). After each action the agent records the previous and the next aggregated states, the action taken and the reward attained (line 4). After $N_k$ consecutive transmissions, the policy for the next steps is updated by solving the Bellman equation. The algorithm terminates when the policy or the attained value converges.

Note that the algorithm does not rely on knowing the packet loss probabilities. That is, the algorithm learns transition probabilities of the induced MDP at any fixed channel condition regardless of the exact packet loss values. Obviously, the algorithm relies on that these probabilities are fixed in time.

For the average cost long run case, the algorithm should be altered by correspondingly adjusting the learning step and the update step (see, e.g.,~\cite{mahadevan1996average}).
We discuss the implementation details and results in Section~\ref{sec:Impl}.

\subsection{State aggregation with a TTE constraint}\label{sec:yesTTE}
In this subsection we utilize a similar aggregated MDP formulation to encompass TTE-constraints. 
Since both TTE constrained and unconstrained models are never considered simultaneously, with slight abuse of notation, we will denote the states for the constrained case similarly to the unconstrained one. The connotation will be clear from the context.
Since under a TTE-constraint stored packets are getting obsolete, the suggested state aggregation will incorporate the age of the "oldest" line. In particular, we propose two state aggregations, both of which maintain the number of empty lines and the age of the oldest line, where \textit{Aggregation $I$} also preserves the size of the largest clique encompassing this line, while \textit{Aggregation $II$} keeps the size of the largest clique regardless of whether this clique encompasses the oldest line. 
Next, we formally describe the two state aggregation functions which map the detailed state to the corresponding aggregated state; we also design a model-based learning similarly to the case with no TTE constraint.

\subsubsection{Aggregation I}
\label{subsec:TTE_agg1}
\ifdouble
Define mapping $\phi_I:{\bf M^{TTE}}\to\{{\bf N}\times{\bf N}\times{\bf N}\}$
\[
\phi_I(s)=\{F,C,E\},
\]
\else
Define $\phi_I:{\bf M^{TTE}}\to\{{\bf N}\times{\bf N}\times{\bf N}\}$, $\phi_I(s)=\{F,C,E\}$,
\fi
where $F(s)$ is the lowest strictly positive TTE in $s$, $C(s)$ is the size of the maximal clique, which contains the row with $\tau=F$, and $E(s)$ is the number of empty lines in $s$, where $\tau$ was defined in section~\ref{sec:model}. Note that $C(s)$ is not necessarily equal to $L(s)$, the maximal clique in $s$.
Denote the action space by $\bar A^{I}=\{1,2\}$ where $\bar a\in\bar A^I =1$ stands for sending a coded clique $C(s)$, which contains a line with $\tau=F$, and $\bar a\in\bar A^I =2$ stands for sending an uncoded packet corresponding to a randomly chosen empty line from $E(s)$.

Following the formalization presented in Section~\ref{sec:induced} we define the policy restricted MDP denoted by $\scrM^{I}_1=\scrP(\scrM_0,\phi_I,\bar A^I)$ and the corresponding induced MDP denoted by $\scrM^{I}_2=\scrI(\scrM_0,\phi_I,\hat A^I)$ (see Definition~\ref{def:Policy_Restricted_MDP} and Definition~\ref{def:Induced_MDP}, respectively).

The basic approach for finding an approximately optimal policy under Aggregation $I$, is by harnessing \emph{Algorithm A}.
The corresponding Bellman equation is written similarly to what appears in~\eqref{3}, where the solution is found by substituting the relevant aggregated states.

\subsubsection{Aggregation II}
\label{subsec:agg2}

Similar to Aggregation I we define a second mapping $\phi_{II}:{\bf M^{TTE}}\to\{{\bf N}\times{\bf N}\times{\bf N}\}$, $\phi_{II}(s)=\{F,L,E\}$, where $E$ denotes the number of empty lines in $s$, $F$ is the lowest strictly positive TTE in $s$, and $L=L(s)$ denotes the size of the maximal clique in $s$. Note that there is no knowledge about the size of the maximal clique containing the line with $\tau=F$, as in Aggregation $I$.
Denote the action space $\bar A^{II}=\{1,2,3\}$, where $\bar a=1$ stands for sending a coded maximal clique $C(s)$, which contains a line with $\tau=F$; $\bar a=2$ stands for sending an uncoded packet corresponding to a randomly chosen empty line, and $\bar a=3$ stands for sending a $L(s)$, maximal coded clique in $s$. Note that the action $\bar a=1$ presumes no prior knowledge about the size of $C(s)$. Thus, the decision in this case is myopic as far as the size of clique being sent is concerned.
The learning in the case of Aggregation II is performed by utilizing algorithm A. We compare by simulations both aggregation types, with an alternative heuristic policy in Section~\ref{sec:Impl}.

\section{Study of the properties of $V$}\label{sec:vstudy}
In this section, we present an in-depth study of the suggested abstract MDP-based approach by exploring the properties of the value function found through the reinforcement learning procedure.
Our primary objective is to understand the structure of the value function. Namely, we aim to isolate properties of $V(\bar s)$ related to each one of the aggregation parameters. This, in turn, will allow us to incorporate these properties in the main learning algorithm, resulting in improved speed and precision of convergence. Moreover, it will give us better understanding of how each of the parameters (e.g., clique size) affects the results, and how the overall coding process should behave as a function of these parameters. In particular, in some cases, we will observe a {\it threshold type policy} in one of the parameters. That is, a policy in which there is \textit{at most one switching state} from one optimal action to the second. Such a property is desirable as once the switching point is found, we may set the actions to their optimal values \textit{without the need to iterate until the ultimate convergence.} Furthermore, in most cases, such a threshold policy will give a fundamental and rigorous reasoning to very intuitive results, e.g., if sending a coded clique is beneficial for some $L(s)$, it is definitely beneficial for any $l>L(s)$.

For simplicity, we demonstrate the proof of the existence of a threshold-type policy for the 1-dimensional aggregation defined below. 

\subsubsection{One-dimensional aggregation}
As an alternative to the multi-dimensional aggregation patterns, we introduced an even more coarse abstraction. Namely, define $\phi:{\bf M}\to\{{\bf N}\}$, such that $\phi(s)=L(s)$, that is, the size of the largest clique. \MS{Denote a line which is not in the maximal clique  as \textit{e-line}.}
Define a {\it state aggregation} by the set
$\bar s=\{s:L(s)=l\}$, for some given $l$, $l\in\{1,\cdots,K\}$. \MS{The action space consists of two actions, $\bar a=1$ stands for for sending the maximal clique, while $\bar a=2$ stands for sending an e-line. }
While oversimplified, and as such resulting in maybe inferior performance, this aggregation and the induced MDP serve as a good example 
for which we can investigate the value function and gain important insights.
Proposition~\ref{prop2} below proves the existence of a threshold policy under an average cost. Let $\pi_a$ be a maximizer over all $\pi$ in~\eqref{ind1}. That is:
 $\pi_{a}=\argmax_\pi\lim_{N\to\infty}\frac{1}{N}\E\big[\sum_{t=0}^{N}r(s'_{t+1},a^{\pi}_{t},s_{t})\big]$
\begin{proposition}\label{prop2}
There exists an optimal policy which is threshold policy in the size of the maximal clique. Namely, there exists a constant $k,\;k\in\{2,\ldots,K\}$ such that for $0\leq L(s)<k$ and $s\in\bar s$ it holds 
$
\bar a(\bar s)=2
$,
yet for $k\leq L(s)\leq K$ and $s\in\bar s$,  we have
$\bar a(\bar s)=1$.

\end{proposition}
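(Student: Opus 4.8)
The plan is to recast the claim as a statement about the \emph{supermodularity} (increasing differences) of the action-value function in the pair (clique size, action), and then to establish that structure by a vanishing-discount argument combined with a monotonicity/coupling analysis of the induced transition kernel. First I would invoke the average-cost optimality equation for the one-dimensional induced MDP. Its state space is the finite set $\{0,1,\dots,K\}$ and under either action the chain is communicating, so the standard theory in \cite{bertsekas2005dynamicI} supplies a scalar gain $g$ and a relative value function $h(\cdot)$ with
\[
h(l)+g=\max_{\bar a\in\{1,2\}}\Big\{r(l,\bar a)+\sum_{l'}p(l'\,|\,l,\bar a)\,h(l')\Big\},
\]
and any stationary policy selecting a maximizer for every $l$ is average-optimal. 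Writing $Q(l,\bar a)=r(l,\bar a)+\sum_{l'}p(l'\,|\,l,\bar a)h(l')$ and $D(l)=Q(l,1)-Q(l,2)$, the assertion of Proposition~\ref{prop2} is exactly that $\{l:D(l)\ge 0\}$ is an up-set $\{k,\dots,K\}$. Hence it suffices to show $D$ is non-decreasing in $l$: once $D(l)\ge 0$ it stays non-negative, the smallest such $l$ is the threshold $k$, and the degenerate case $l\le 1$ (a "clique'' of one packet is already uncoded) forces $k\ge 2$.

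Second I would handle the instantaneous-reward part of $D$. Sending the maximal clique of size $l$ lets every member decode upon a successful reception, so $r(l,1)$ grows with $l$ (for homogeneous reception probability $p$ one has $r(l,1)=lp$), whereas an e-line transmission delivers at most one packet, so $r(l,2)\le 1$ independently of $l$. Thus $r(l,1)-r(l,2)$ is non-decreasing in $l$ (strictly increasing in the homogeneous case), which supplies the monotone contribution to $D$ coming from immediate rewards.

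Third — and this is where the real work lies — I would control the future-value term $\sum_{l'}[p(l'\,|\,l,1)-p(l'\,|\,l,2)]\,h(l')$. The clean route is the vanishing-discount method: prove the structural result for the $\gamma$-discounted problem and then let $\gamma\uparrow 1$. For fixed $\gamma$ I would run value iteration and show by induction that each iterate $V_n$ is non-decreasing in $l$ (a larger clique is never worse) and that the discounted analogue of $D$ is non-decreasing, verifying that both properties are preserved by the Bellman operator~\eqref{3}. Monotonicity of $V_n$ rests on a coupling argument: from a larger clique the post-action distribution over next clique sizes stochastically dominates the one from a smaller clique under the same action, since extra clique members only add decoding and overhearing opportunities. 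Passing $\gamma\uparrow 1$ — using that on a finite state space the relative value functions converge along a subsequence to a solution $h$ of the optimality equation — transfers the monotonicity of $D$ to the average-cost criterion in~\eqref{ind1}, yielding the threshold $k$.

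The main obstacle is the third step, specifically verifying stochastic monotonicity of the \emph{induced} transition kernel. Because $\phi$ is non-injective and the induced probabilities $p(l'\,|\,l,\bar a)$ are averages over detailed states weighted by conditional stationary probabilities (cf.\ Eq.~\eqref{eq:ind1}), these kernels are policy-dependent and are not available in closed form. I would therefore either impose the "mild conditions'' alluded to in the introduction (e.g.\ homogeneous reception probabilities, or a stochastic-ordering assumption on how detailed states refine a given clique size) under which the coupling is explicit, or establish the dominance directly on the detailed chain and then show it descends to the aggregated chain. Showing that the monotone/supermodular structure genuinely survives both the aggregation and the passage to the average-cost limit is the crux of the argument.
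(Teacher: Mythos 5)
Your plan follows the general Puterman-style route (monotone value function plus a supermodular $Q$-function implies a threshold policy), which the paper explicitly sets aside in favor of a short ad hoc argument, and the step you yourself identify as the crux is a genuine gap rather than a routine verification. To make $D(l)=Q(l,1)-Q(l,2)$ non-decreasing you must control $\sum_{l'}\big[p(l'\mid l,1)-p(l'\mid l,2)\big]h(l')$, and here the two kernels behave very differently: the e-line kernel moves mass only to $\{l,l+1\}$, while the clique kernel spreads mass over $\{0,\dots,l\}$ (every member that decodes leaves the clique), so the future-value penalty of coding grows with $l$ and competes with the linearly growing immediate reward $l(1-p)$. Whether the reward term wins uniformly in $l$ requires quantitative slope bounds on $h$ (the paper derives such bounds, with unspecified constants, only in an appendix and only for the discounted case), and on top of that the induced kernels are policy-dependent averages over detailed states, so the stochastic-dominance coupling you invoke is not available in closed form. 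You flag this honestly, but ``impose mild conditions or establish dominance on the detailed chain'' is precisely the missing content, and nothing in your sketch shows $D$ is monotone rather than merely single-crossing; indeed it is not clear that the stronger statement you are aiming at (action $1$ maximizes the optimality equation at \emph{every} $l\ge k$) is even true, whereas the proposition only asserts that \emph{some} optimal policy has threshold form.

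The paper's proof avoids all of this with one structural observation your plan does not use: since receivers store only uncoded packets, a clique transmission can never increase $L(s)$ (and decreases it with positive probability), while an e-line transmission increases it by at most one. Hence if $l_m$ is the smallest clique size at which an optimal policy $\pi^*$ codes, every state with $L(s)>l_m$ is transient under $\pi^*$ (Claim~\ref{lem6}): the chain reaches clique size at most $l_m$ in finite time, and the only way to climb back above $l_m$ is through the state with clique size exactly $l_m$, where $\pi^*$ codes and therefore cannot increase the clique. Transient states contribute nothing to the long-run average in~\eqref{ind1}, so the actions there can be redefined arbitrarily --- in particular so as to produce a threshold policy with the same gain --- and a one-line comparison with the never-code policy shows the threshold exists with $k\le K$. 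I recommend you either adopt this transience argument, or, if you want to keep the supermodularity route, supply the missing superadditivity proof for the induced kernel under explicit assumptions; as written, your third step would not go through.
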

That is, send the maximal clique (a coded packet) if and only if its size is at least $k$. Otherwise, send an e-line (an uncoded packet).



We will need the following notation for the proof of Proposition~\ref{prop2}.
We say that a state $s$ is \textit{recurrent under the policy $\mu$} if when starting at state $s$ and acting according to $\mu$, the probability to return to $s$ is $1$. A state which is not recurrent under $\mu$ is \textit{transient under $\mu$}.

Consider a policy $\pi^*$, which is optimal for the average long run cost,
\ifdouble
\[
\pi_*=\argmax_\pi J_\pi,
\]
\else
$\pi_*=\argmax_\pi J_\pi$,
\fi
where $J^\pi$ is given in~\eqref{ind1}.
Denote a set of states $S_1\subset S$ such that $s^{(i)}\in S_1$ if $a_{\pi_*}(s^{(i)})=1$.
Denote a state $s^{(m)}$, such that $s^{(m)}\in S_1$ and $L(s^{(m)})<L(s^{(i)})$, $\forall i, s^{(i)}\in S_1$. Namely, $S_1$ is the set of states for which sending a clique is optimal, and $s^{(m)}$ is the state with the minimal maximal clique in $S_1$ -  for which it is optimal to send the maximal clique.
We have the following claim.
\begin{claim}\label{lem6}
Any state $s^{(i)}$ such that $L(s^{(i)})>L(s^{(m)})$ is transient under $\pi^*$.
\end{claim}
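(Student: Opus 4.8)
The plan is to reduce the claim to the statement that every recurrent state of the Markov chain induced by $\pi^*$ has maximal clique size at most $l_m:=L(s^{(m)})$, and to obtain this from two monotonicity facts about how $L$ evolves in a single step. First, under $\bar a=1$ (transmitting the maximal clique) $L$ can only weakly decrease: a coded packet is discarded by every user outside the clique, so no new edges are added to $G(V,\Gamma)$, while each clique member that decodes has its row zeroed and leaves the clique; hence $L(s_{t+1})\le L(s_t)$. Second, under $\bar a=2$ (transmitting an uncoded e-line) $L$ can only weakly increase, and by at most one: the targeted e-line vertex is not in the maximal clique, so the current maximal clique survives intact, giving $L(s_{t+1})\ge L(s_t)$; moreover any newly formed clique must consist of the e-line vertex attached to a pre-existing clique among users whose packets it already stored, so its size is at most $L(s_t)+1$. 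I would prove both facts directly from the state-update rule of Section~\ref{sec:model}.

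Next I would use the definition of $s^{(m)}$. Since the one-dimensional policy depends only on $L(s)$ and $l_m$ is the smallest clique size at which $\bar a=1$ is optimal, we have $\pi^*(l)=2$ for every $l<l_m$ and $\pi^*(l_m)=1$. Combined with the monotonicity facts this shows that $A:=\{s:L(s)\le l_m\}$ is \emph{closed} under $\pi^*$: from $L<l_m$ action $2$ raises $L$ by at most one, so $L_{t+1}\le l_m$, and from $L=l_m$ action $1$ keeps $L_{t+1}\le l_m$. A standard fact for finite Markov chains then finishes the argument modulo reachability: if a state $s$ with $L(s)>l_m$ can reach $A$ with positive probability, then, since $A$ is closed and hence cannot return to $s$, the state $s$ cannot lie in a recurrent class and is transient.

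It therefore remains to show that from every state with $L>l_m$ the chain reaches $A$ with positive probability, and this is the main obstacle, requiring the optimality of $\pi^*$. The difficulty is precisely that action $2$ never lowers $L$, so a policy choosing $\bar a=2$ at a recurrent state with $L>l_m$ would trap the chain in $\{L>l_m\}$ forever, producing recurrent states above $l_m$ and falsifying the claim. I would rule this out by an exchange (policy-improvement) argument: at such a state the maximal clique $Q$ has size $L>l_m$, so for $L\ge 2$ switching to $\bar a=1$ yields the strictly larger expected immediate reward $\sum_{k\in Q}p_k$ (several packets decoded at once) rather than the single-packet reward of $\bar a=2$, while simultaneously routing the chain toward the low-clique region in which reward is actually harvested. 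Formally I would compare the two actions through the average-cost optimality (Bellman) equation and show that the action-value of $\bar a=1$ strictly exceeds that of $\bar a=2$ at every recurrent state with $L>l_m$, contradicting optimality unless no such recurrent state exists. The technical heart is controlling the relative value function across clique sizes so that the larger immediate reward of the clique action dominates the future terms; this single-crossing behaviour of the action-value difference is the crux shared with the threshold statement of Proposition~\ref{prop2}.

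Granting this, every state with $L>l_m$ reaches the closed set $A$ with positive probability, hence is visited only finitely often almost surely and is transient under $\pi^*$, establishing the claim. I expect the monotonicity facts and the closed-set observation to be routine, and the optimality-based exclusion of trapping states above $l_m$ to be the part demanding the most care.
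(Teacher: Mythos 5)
Your two monotonicity facts (a clique transmission never increases $L$ and is the only action that can decrease it; an e-line transmission preserves the current maximal clique and can raise $L$ by at most one), the resulting closedness of $A=\{s:L(s)\le l_m\}$ under $\pi^*$, and the reduction ``positive probability of entering a closed set that excludes $s$ implies $s$ is transient'' are exactly the backbone of the paper's argument. The divergence --- and the gap --- is in how you establish that every state with $L>l_m$ reaches $A$ with positive probability. You route the case of states above $l_m$ at which $\bar a=2$ is prescribed through an average-cost exchange argument, claiming that the action-value of $\bar a=1$ must strictly exceed that of $\bar a=2$ at every recurrent state with $L>l_m$, and you yourself flag the ``single-crossing behaviour of the action-value difference'' as the unexecuted technical heart. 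That step is not carried out; worse, as described it is essentially Proposition~\ref{prop2} itself (the threshold property), for which this Claim is meant to be a lemma, so the plan is circular. It also proves far more than the Claim asserts: the Claim does not say that $\bar a=1$ is taken above $l_m$, only that such states are transient.

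The paper dispenses with optimality altogether --- it explicitly remarks that the Claim holds even when $\pi^*$ is not optimal --- because the reachability step is purely structural. Whenever $\bar a=1$ is taken there is positive probability that the transmitted clique dissolves (all members decode), so from any state in $S_1$ the chain enters $A$ with positive probability after finitely many steps; and a state with $L>l_m$ at which $\bar a=2$ is prescribed cannot be trapped above $l_m$, because under $\bar a=2$ the level drifts weakly upward and at $L=K$ no e-line exists, so the chain must with positive probability hit a level where the clique is transmitted and then collapse into $A$. (The paper's write-up only treats states in $S_1$ explicitly and leaves this last case implicit, but it needs no Bellman-equation comparison, only the feasibility structure of the two actions.) The repair for your proof is therefore not the value-function estimate you defer, but the observation that the upward drift under $\bar a=2$ terminates in a forced clique transmission, after which your own closed-set argument applies verbatim.
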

\begin{proof}
We use the fact that nodes do not use \textit{coded} packets in order to decode packets \textit{not intended to them}. Namely, nodes store only uncoded packets intended for other users.
Hence, clique transmissions cannot increase the clique size, and, moreover, decrease it with some non-zero probability (note that transmission of an e-line can increase the clique size, yet by at most $1$). Consider some $s^{(i)}\in S_1$. By definition $L(s^{(i)})>L(s^{(m)})$.  Since $p(s^{(j)}|s^{(i)},1)>0$, where $j\leq m$, the state $s^{(m)}$ will be reached in finite number of transmissions. Furthermore, the states with clique size more than $m$ will not be attended afterwards. That is, once in $s^{(m)}$, the future state can not be increased.
Consequently, for any $s^{(i)}$ such that $L(s^{(i)})>L(s^m)$, $s^{(i)}$ is transient under $\pi^*$.
\qed
\end{proof}
Note that the claim holds even if $\pi^*$ is not the optimal policy.

\begin{proof}[Proposition~\ref{prop2}]
Consider a policy $\pi^*$, which is optimal for the average long run cost,
a set of states $S_1\subset S$ and $s^{(m)}$ as above.
Denote the set $S_r$ such that $s^{(i)}\in S_r$ if $L(s^i)\leq L(s^{(m)})$, and denote $S_t=S\backslash S_r$.
Now see that by the claim above,
$s^{(m)}$ is the only recurrent state in $S_1$.
Define $n_m$, the first time under $\pi^*$ to be in $s^m$. We have
\ifdouble
\begin{align*}
& V=J^{\pi^*}=\lim_{N\to\infty}\frac{1}{N}[\sum_{n=0}^{n_m-1}r_{\pi_*}(s_{n},a_{n})\\
&\;\;+\sum_{n=n_m}^{N}r_{\pi^*}(s_{n},a_{n})].
\end{align*}
\else
\begin{align*}
& V^{\pi^{AC}}=J^{\pi^*}=\lim_{N\to\infty}\frac{1}{N}[\sum_{n=0}^{n_m-1}r_{\pi^*}(s_{n},a_{n})
+\sum_{n=n_m}^{N}r_{\pi^*}(s_{n},a_{n})].
\end{align*}
\fi
Observe that all states encountered at times $n>n_m$ are recurrent. That stems from the fact that after the transmission at time $n_m$, the process stays in $S_r$. Since $n_m$ is finite a.s., the first sum (once normalized by $N$) goes to zero.
Next, define policy $\pi^m$ which acts similarly to $\pi^*$ for all $j$ such that $L(s^{(j)})\leq L(s^{(m)})$ (that is, all recurrent states) yet sets $a(s^{(j)})=2$ otherwise. That is, a threshold policy. Denote by $n_l$ the first time to hit $s^{(m)}$ under $\pi^m$.
Observe that
\ifdouble
\begin{align*}
&\lim_{N\to\infty}\frac{1}{N}\sum_{n=n_m}^{N}r_{\pi^*}(s(n),a(n))\\
&\;\;=\lim_{N\to\infty}\frac{1}{N}\sum_{n=n_l}^{N}r_{\pi_m}(s(n),a(n))=V^{\pi^{AC}}
\end{align*}
\else
\begin{align*}
&\lim_{N\to\infty}\frac{1}{N}\sum_{n=n_m}^{N}r_{\pi^*}(s_{n},a_{n})=\lim_{N\to\infty}\frac{1}{N}\sum_{n=n_l}^{N}r_{\pi^m}(s_{n},a_{n})=V^{\pi^{AC}}
\end{align*}
\fi
Thus $\pi^m$ is also an optimal policy. Note that the relation between $n_l$ and $n_m$ is not essential, since both are finite.

It is left to show that the policy which always sends e-lines, that is, sends no cliques at all is suboptimal. Denote such a policy as $\pi^e$. However, in such a policy the expected reward at each step is given by $1-p$, and any other policy which sends a clique at any step outperforms $\pi^e$ by some $\epsilon>0$.
This accomplishes the proof of the proposition.
\qed
\end{proof}
The proposition above is intuitive, since the clique size can only be increased by $1$. This renders all states with the maximal clique larger than the threshold to be, in the long term, unreachable. 

\MS{Note that Puterman~\cite{Puterman} gives general guidelines how to demonstrate the monotonicity of the optimal policy, both for the average cost and the discount cost infinite horizon criteria. 
Here, we merely presented the short proof which specifically suits this simple case.}

The connection between average and discounted costs, is well-known and is described by the Blackwell optimality condition~\cite{bertsekas2005dynamicI}. In particular, Blackwell optimal policy is optimal for the average cost as well. Yet, as seen from the proof of Proposition~\ref{prop2}, the optimal policy for the average cost, in this case, is not unique. Hence, the opposite is not necessarily true. Nevertheless, we address this in the simulations.

The technique demonstrated in the 1-D case can be extrapolated to more complex aggregations. However, the proofs in these cases will involve treatment of significantly more complex Bellman equations.
Alternatively, one may merely assume the existence of a threshold policy, based on observations from simulations.
The main advantage of having the threshold-type policy proof/observation is the possibility to enhance algorithm A, as we explain next.
Assume there exists a threshold policy in $E$, as was presented in Aggregation $I$.
Namely, once for some $E=i$, there is a {\it switch} from optimal action $2$ (transmission of an empty line) to action $1$ (transmission of a clique), then we deduce that $1$ is optimal for all $E<i$ , while $2$ is optimal for all $E\geq i$.
Hence, if existence of a threshold policy in one of the parameters (e.g. $F$,$C$,$E$) is known, at step $4$ of the algorithm, in case the policy in some (possibly rarely visited) state is not yet clear at some point of the algorithm run, correct it according to the already known (or conjectured) threshold rule. This method will accelerate the overall convergence.
Another useful property of $V$, which gives good understanding of its behavior, is 
its slope. 
\ifArX
(See Appendix~\ref{app:bounds} for both upper and lower bounds on this slope.)
\else
(See~\cite{2015arXiv150202893S} for both upper and lower bounds on this slope.)
\fi 
Similarly, the bounds can be useful for the manual calibration of the value function in order to speed up the convergence.

\section{Simulation results}\label{sec:Impl}
In this section, we evaluate the suggested transmission strategy through extensive MATLAB simulations. Our simulation results provide insight on the impact of each of the mechanisms described throughout the paper.  Specifically, we thoroughly examine the effect of different parameters such as TTE and packet loss probabilities on the value function or on the policy structure. In addition we evaluate our algorithm and compare the different aggregations suggested.

In our simulations we consider a single cell comprising an AP and $K$ receivers. Since our results relate to the traffic from the AP to the users, our simulations only consider the downstream traffic. We assume that all $K$ users have pending traffic waiting to be transmitted. An \emph{i.i.d} Bernoulli channel error is assumed, where each packet transmission is received or dropped by each user with probability $1 - p$ and $p$, respectively, and is independent between different transmission attempts. The AP works according to Algorithm $A$ with corresponding aggregation. In all cases compared, the AP activates the learning routine considering the discounted infinite horizon cost. Thus, it computes the values attained by value functions for all possible initial states. We later use the same policy for calculating the long run average cost. Note that based on the Blackwell optimality argument (e.g.,~\cite{bertsekas2005dynamicI}), if $\gamma\to1$,   under mild conditions the policy which is optimal for the discounted problem is optimal for the average cost problem as well. The number of iterations for each phase (learning and improvement) is set in accordance with the specific configuration.

\subsection{Results without a TTE constraint}
We start by evaluating the policy resulted from our learning algorithm, for the proposed aggregation in the case of no TTE constraint (Section~\ref{sec:st-agg}). We compare our results with the bounds obtained in~\cite{wang2012capacity}. The aggregation for the TTE-unconstrained case constitutes a 2-dimensional state space, namely, the size of the maximal clique $C$ and the number of empty lines $E$ (Section~\ref{sec:st-agg}). The action space comprises two possible actions, transmitting to a user that its packet was not received by any user (empty line in the state matrix) and transmitting to the maximal group of users in which each member of the group has a packet destined to every other user in the group (maximal clique in the state matrix).
The performance results \MS{(i.e the percentage of successfully decoded packets, using the retransmissions)} are seen in Figure~\ref{fig5}(top) along with comparison to the bound from~\cite{wang2012capacity}. The bound is derived for systems with \emph{much stronger coding capabilities}, hence any potential scheme, theoretical or practical as can be, cannot attain better performance. Denote it as the \textit{Wang upper bound}. Note that in order to calculate the bound one needs to solve $120$ inequalities, hence the graph has small discrepancies. For larger systems, such calculations may be too complex.
As for the optimal policy, the simulation results show that is the same regardless of the packet loss probability. In particular, the optimal policy is defined by transmitting a random empty line whenever there are empty lines ($E>0$) and transmitting to the maximal clique otherwise. Accordingly, the obtained policy is a threshold-based policy. The intuition behind this strategy is clear: the reward associated with both possible actions, transmitting a random empty line or transmitting the maximal clique, is time independent, i.e., the expected reward is the same if the transmission occurs now or in one of the following transmission opportunities. Moreover, since any empty line is not included in any clique all the more so in the maximal clique, yet transmitting an empty line can potentially increase the size of a clique without incurring any penalty for delaying the current maximal clique transmission, it is worthy to fill in the state matrix such that no empty lines are left, and only then to transmit the maximal clique. Note that this policy \emph{coincides with the one heuristically suggested} in~\cite{cohen2013coded} denoted as the \textit{semi-greedy algorithm} (SG). Accordingly, the simulation results imply that under the restricted action space described above, the semi-greedy algorithm \cite{cohen2013coded} is optimal, as long as no TTE constraints are applied. Moreover, for the simple case of 2-users system, these results \emph{achieve the sum-capacity} which is found according to~\cite{georgiadis2009broadcast} and~\cite{wang2012capacity}.
Figure~\ref{fig5}(down) shows results (value functions at all states) for differentiated packet loss. One sees that the case with equal packet loss for all users achieves the lowest value function vector. The highest values are obtained for the case where two of the five users have relatively low packet loss ($0.1$), while the other three users have relatively high packet loss (more than $0.4$). 
This is explained by that the lossy users tend quickly to have a pending packet stored at reliable users. Hence, the lines corresponding to these users are most probably not empty while reliable users keep successfully receiving uncoded packets. A clique will be sent when some of the reliable users will not receive their packet forming a large enough clique for transmission. In overall, the performance is tangibly increased, but the throughput improvement comes at expense of hampered fairness. 

\ifdouble
\begin{figure}
{\includegraphics[angle=0, width=0.40 \textwidth]{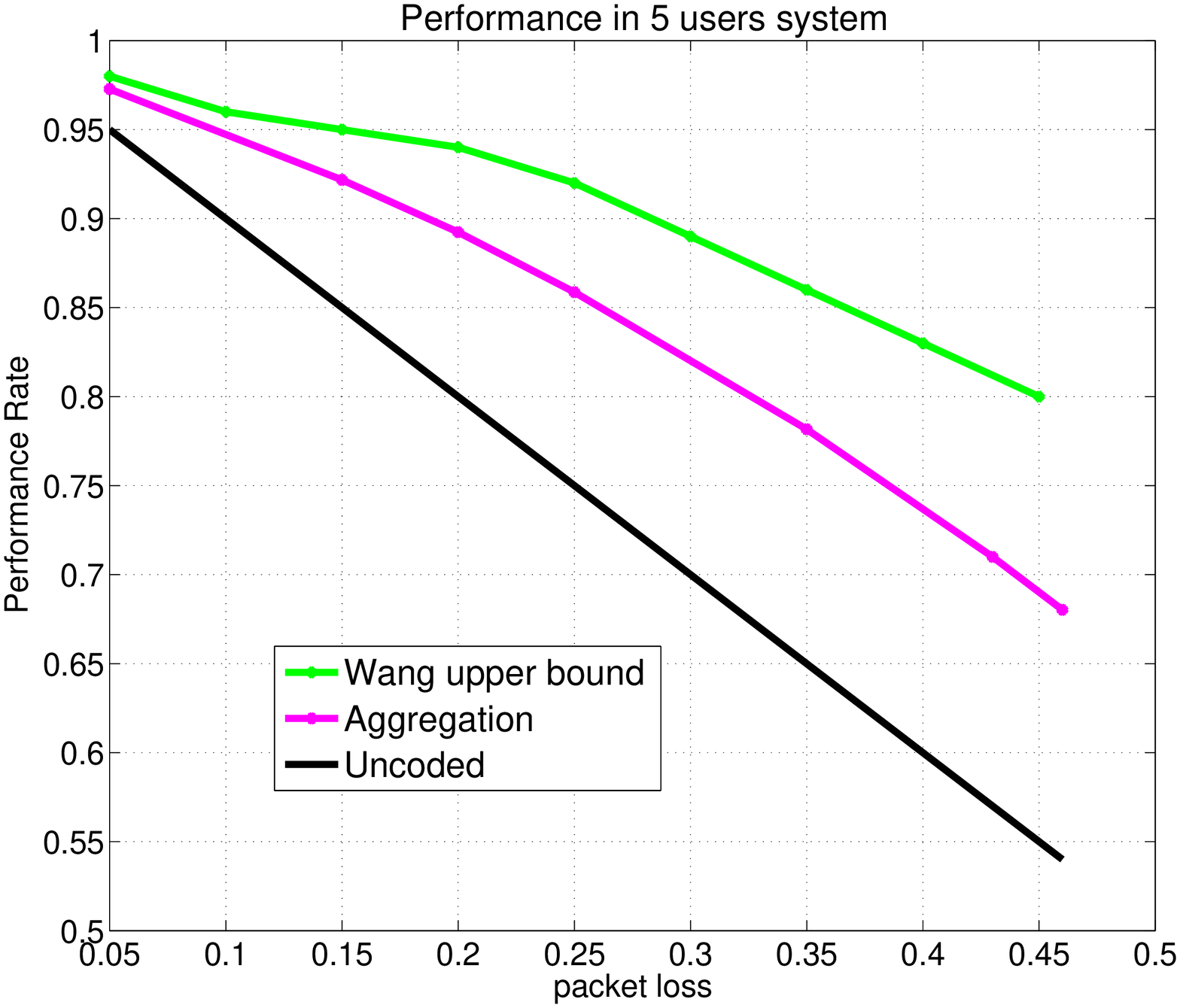}} %

\caption{\sl\tiny
System of 5 users. Uncoded system is compared with simple aggregation and with maximal achievable capacity for the general transmission method. } \vspace*{-10pt}}\label{fig5}
\end{figure}
\else

\begin{wrapfigure}[19]{R}{0.4\textwidth}
	\vspace*{-40.0pt}
  \begin{center}
    \includegraphics[width=0.4\textwidth]{Cap5.eps}
    \includegraphics[width=0.4\textwidth]{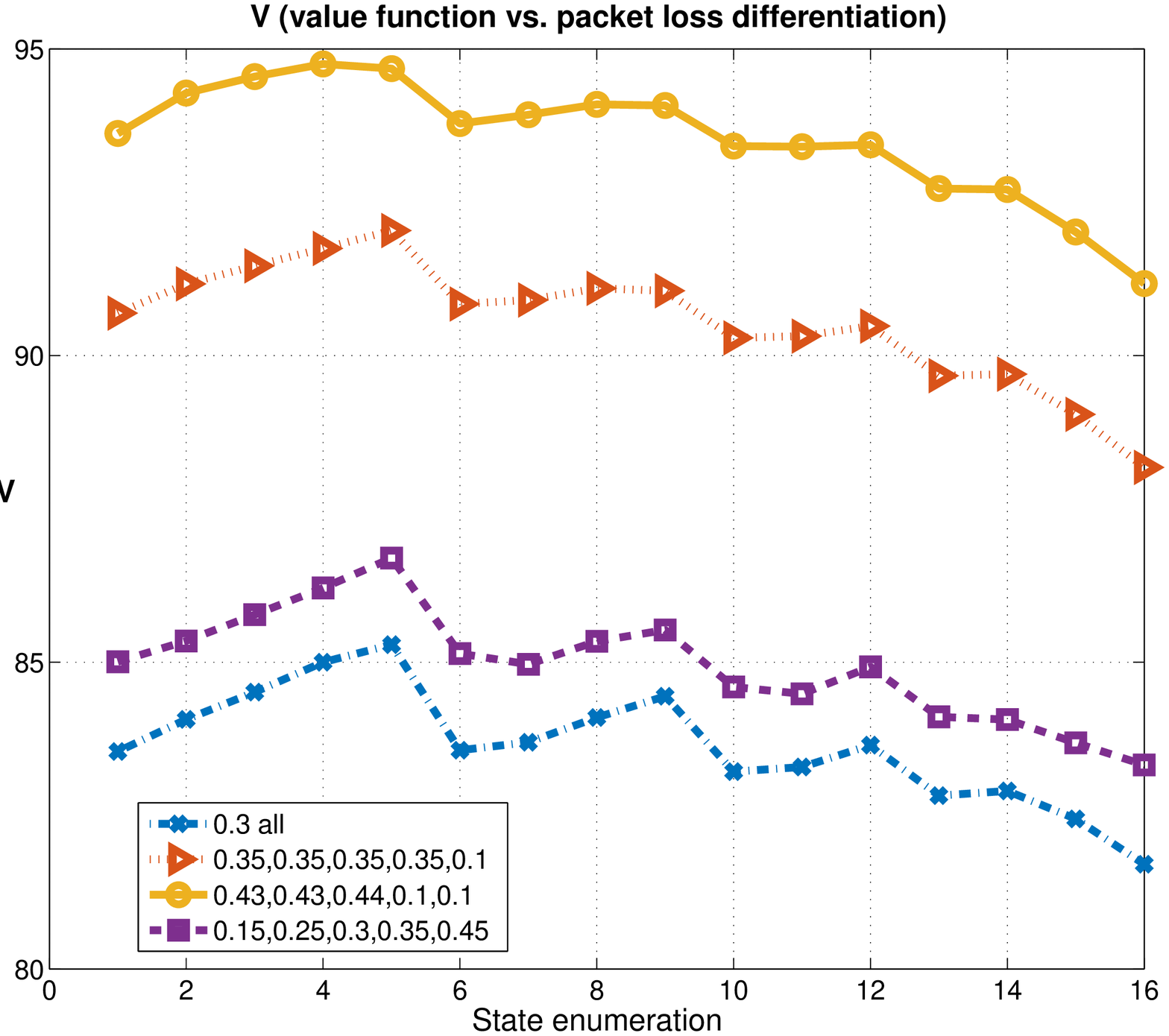}
  \end{center}
  \vspace*{-35.0pt}
  \caption{\sl System of 5 users results with no TTE constraint}\label{fig5} 

\end{wrapfigure}

\fi

\subsection{Results for TTE constrained aggregations}
Next we evaluate the performance of the suggested transmission strategy under TTE constraints.

We simulated \emph{Aggregation I} (Section~\ref{sec:st-agg}), aiming to examine the structure of the value function for all feasible states.
Namely, we try to 
to understand the effect of different parameters on $V(\hat s)$. Our objective was to identify simple properties such as monotonicity, convexity and threshold-type structure.  Such properties can be potentially utilized for the RL convergence speed-up. This will allow to successfully operate larger systems.
We examined a system with $K=5$ receivers. We set $\gamma=0.99$.
The results are depicted in Figure~\ref{fig2}. The $Y-axis$ depicts the value attained by each state, $ V(F;C; E)$, (denoted by asterisks). Each value corresponds to the given initial state. $X-axis$ relates to an enumeration of the states, $\{1,2,\cdots\}$. 
Note that the asterisks form groups of monotoneous patterns of values. In particular, the states are assigned numbers which grow first in TTE ($F$), next with maximal clique size ($C$) and finally they grow with the number of empty lines ($E$). For example, state 1 refers to the state in which there are no empty lines, maximal clique size 1 and $TTE = 9$, State 2  relates to the values of the state in which there are no empty lines, the maximal clique size contains the line with the greatest TTE is 8, state 96 which is the last state refers to the state in which there are 5 empty lines (i.e. the empty matrix)

Note that for the widespread (e.g., 802.11) policy that only allows uncoded transmissions the value is fixed $\frac{1-p}{1-\gamma}=\frac{1-0.25}{1-0.99} = 75$, which is below the scale of the graph, i.e., the value for all states is higher than the one for the uncoded ARQ retransmissions.
\ifdouble
\begin{figure}
\begin{center}
\centerline{\includegraphics[angle=0, width=0.43 \textwidth]{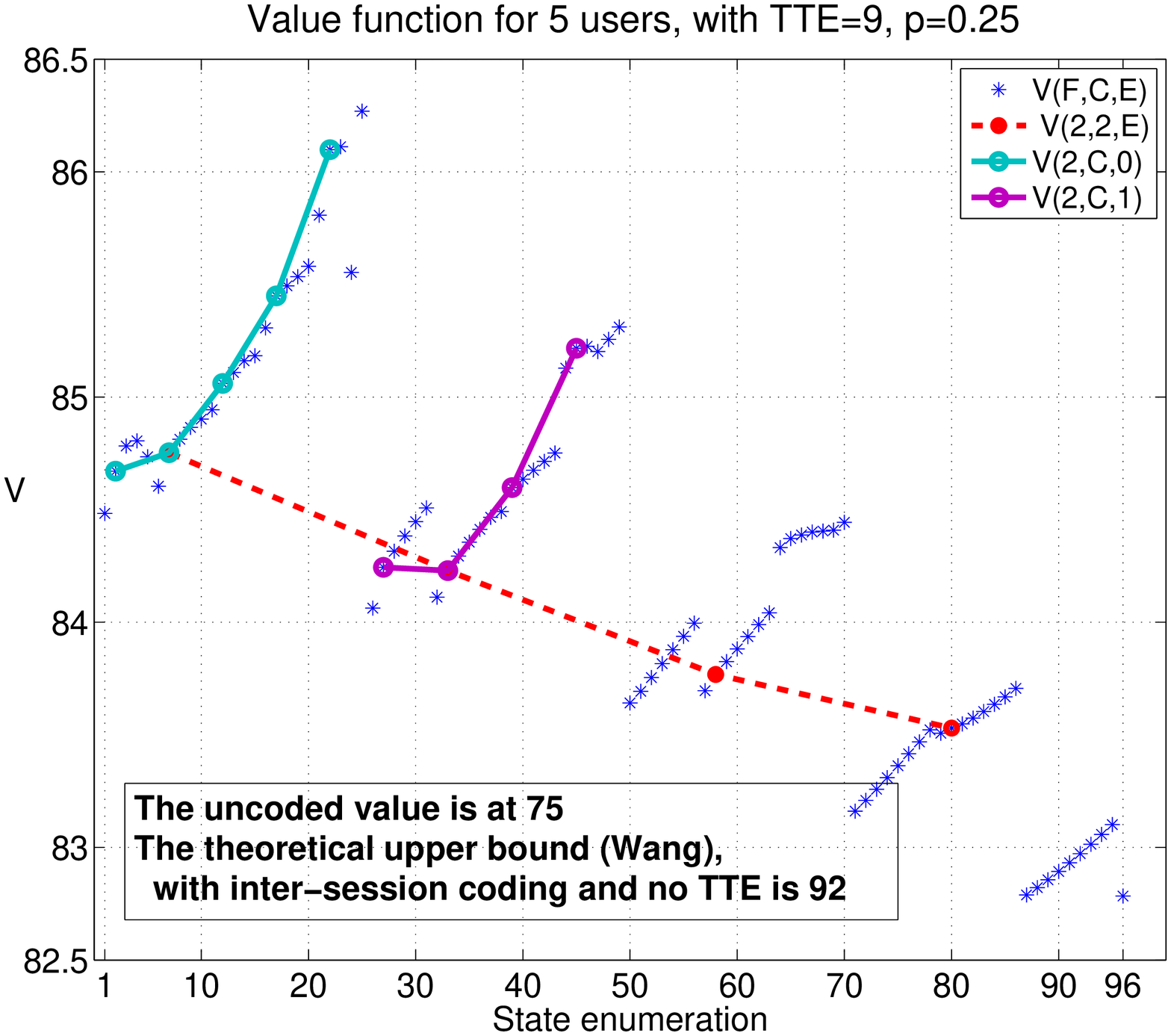}} %
\end{center}
\vspace{-20pt}
\caption{\sl\small
Aggregation I, a graphical representation of the value function. The horizontal axis denotes the state enumeration $\{1,2,\cdots\}$. The vertical axis denotes the discounted infinite horizon values given the corresponding initial state. Observe that the dots form groups of ascending patterns of values. Each group represents the number of empty lines. The group with $E=0$, that is $V(F,C,0)$, is near $10$, $V(F,C,1)$ is near $30$, $V(F,C,2)$ is near $60$, $V(F,C,3)$ is near $80$, $V(F,C,4)$ is near $90$ and the lowest isolated state stands for the empty matrix.}\label{fig2}
\vspace*{-10pt}
\end{figure}
\else

\begin{wrapfigure}[19]{H}{0.43\textwidth}
		\vspace*{-40pt}
\begin{center}
\includegraphics[width=0.43 \textwidth]{ALL_isit1.eps} 
\includegraphics[width=0.43\textwidth]{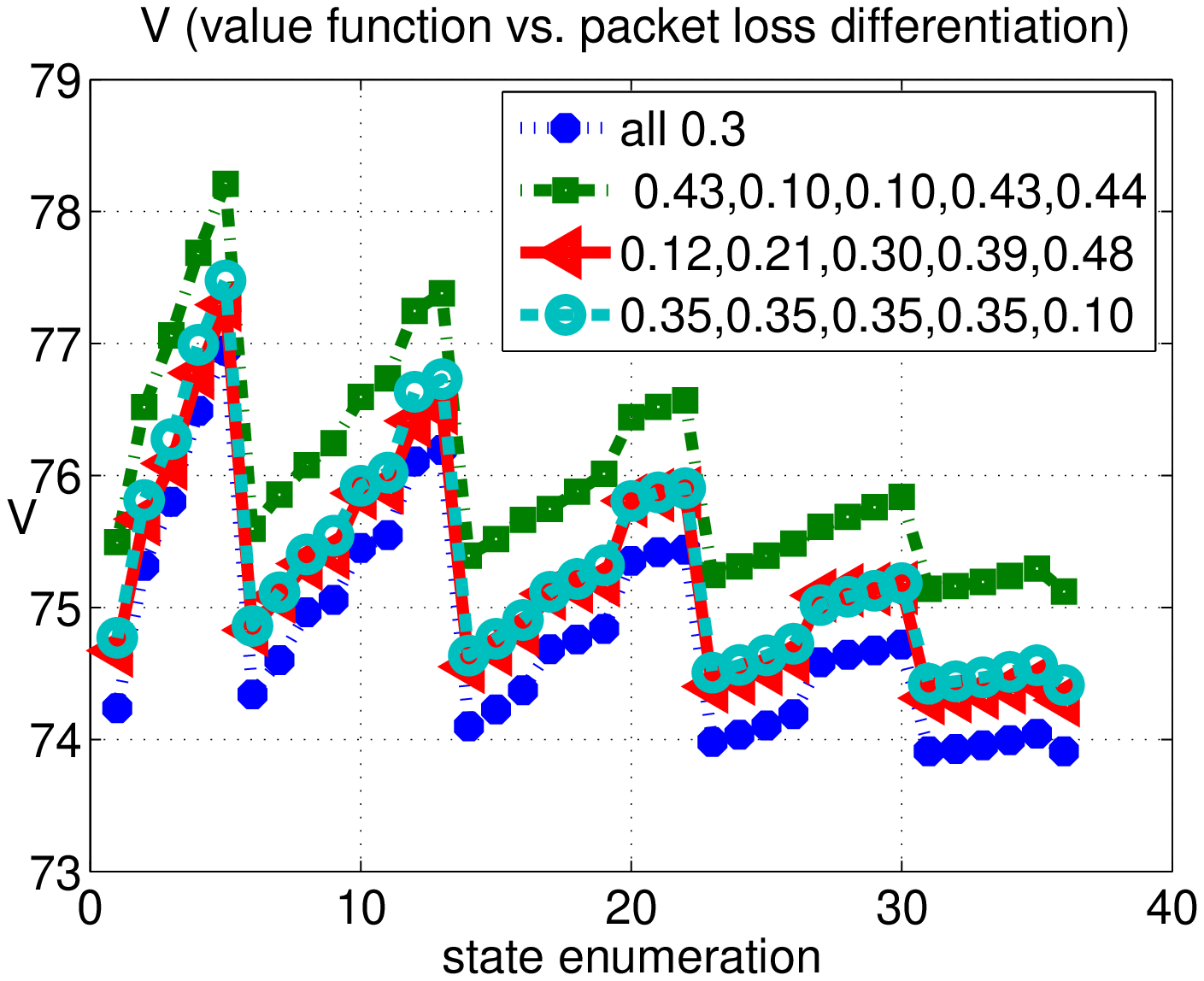}
\vspace{-50pt}
\end{center}
\caption{\sl
Aggregation I. Each group of asterics represents the number of empty lines. The group with $E=0$, that is $V(F,C,0)$, is near $10$, $V(F,C,1)$ is near $30$, $V(F,C,2)$ is near $60$, $V(F,C,3)$ is near $80$, $V(F,C,4)$ is near $90$ and the lowest isolated state stands for the empty matrix(top). Effect of differentiated packet loss (down)\vspace*{-1pt}}\label{fig2}

\vspace{-50pt}
\end{wrapfigure}

\fi

We emphasized the structure of the value function when only a single parameter varies while holding the other two are fixed.
Specifically, in order to understand the effect of empty line on the obtained policy, we emphasize by the dotted (red) line the states in which the TTE and the size of its corresponding clique are constant, specifically $F=2$, $C=2$, and the number of empty lines varies ($0 \leq E \leq 3$). 
This can be intuitively explained by the property that lines which are non-empty contain some information that potentially can be exploited in future transmissions, while the empty lines contain no information whatsoever. In addition, in order to demonstrate the value function dependence on the clique size, we emphasize the states in which TTE is fixed and equals 2 ($F=2$), number of empty lines is fixed (we show two different values), and the clique size varies. Observe $V(2;C;0)$ and $V(2;C;1)$ which are represented by the solid cyan and the solid magenta lines, for $E=0$ and $E=1$, respectively. As expected, both lines have an increasing pattern with $C$, i.e., the greater the maximal clique which corresponds to the line with lowest TTE, the greater the value function. 
By observation, one can also assume that the value function has a convex increasing form in $C$ (cyan and magenta lines) and convex decreasing in $E$ (the red line).



The effect of the differentiated packet loss is demonstrated in Figure~\ref{fig2}(down). We compared four different packet loss distributions, with average value equal to $0.3$. Similarly to the case with no TTE constraint, the best throughput is achieved where packet loss was with highest variance. However the difference was significantly less visible, which is clearly understood from the TTE constraint, since with TTE will limit the number of packet sent by the AP before sending a clique incorporating the lossy users pending packets. 
Note that for the same reasoning, also the fairness issue is less acute. For example, in the case where most reliable user had packet loss equal to $0.12$ while the most lossy one had packet loss equal to $0.48$, the ratio of the number of sent packets by the AP was $7:4$ in favor of the reliable user.


We explore next the dependence of the policy found for Aggregation I on various parameters, at equal packet loss which ranged from $5$$\%$ to $35$$\%$. The results are shown in Figure~\ref{fig1}.
For reference convenience, the first column denotes the state enumeration. Recall, that $1$ stands for sending the maximal clique containing the oldest line, while $2$ stands for transmitting a random empty line. 

These results clearly demonstrate that the algorithm 
converges to the optimal policy in accordance with the channel condition.
As for the threshold-type policy, the proof of this property is hard to accomplish, as it relies on the transition probabilities, which are hard to attain.
However, the threshold-type property, can be observed by simulations, as it is seen from the table (see states (20-22), (27-29).) Note that the property can highly accelerate the RL procedure. 
As explained in Section~\ref{sec:st-agg} the transition probabilities are approximated by RL. Hence, simulation-based exploration is imminent in order to identify structural properties. 
Alternatively, one can attempt to prove the threshold property for the average long run case, as we proved for the 1-D case in Section~\ref{sec:vstudy}.
Note that as long as all three dimensions of $V(\hat s)$  are viewed, the thresholds are expected to form three-dimensional surfaces. 


We conclude the observations above by proposing an effective speedup for Algorithm $A$. 
The proposed enhancement stems from simulation results and by the previously discussed properties of value function in section~\ref{sec:vstudy}.
First, in order to successfully operate a larger system, one can solve a (trial) system with small number of users with the same aggregation and the same channel conditions. Next, the resulting optimal policy can be extrapolated in order to get the policy for the desired system, for example, threshold and monotonicity patterns, as we examined above.
In particular, define an approximating policy $\pi_X^0$ using an assessment based on the policy found from a smaller system and the observed properties. Heuristically, this policy should allow a randomization around \textit{conjectured} threshold states. Next, an adjustment of $\hat V_i$ and that of $\pi_X^{k+1}$ is heuristically performed. Again, this improvement can be done using the estimated properties of the value function, or can be combined within the regular run of the reinforcement learning as it appears in Algorithm $A$. \MS{See also monotone policy iteration algorithm in~\cite{Puterman}}.

In order to evaluate the effect of TTE on the policy, we compare both Aggregation I and Aggregation II with the greedy and semi-greedy algorithms proposed in [16]. Specifically, the greedy algorithm aims at maximizing the instantaneous reward received for each transmission opportunity. Hence, the policy according to the greedy algorithm is to transmit the maximal clique for each transmission opportunity. Whenever there is no clique (i.e., $C \leq 1$) transmit a random empty line. The semigreedy (SG) policy is defined in the subsection above. Figure~\ref{fig4} (left and middle) compares the value function of the discounted infinite horizon cost with a zero matrix as the initial state for the various policies. 

Figure~\ref{fig4} (left) clearly depicts that as expected  under the TTE constraints the semi-greedy algorithm performs almost as poorly as the uncoded policy. This is explained by that it does not take into account lines which can be discarded, hence misses clique transmission opportunities just for trying to fill the matrix with non-empty lines.
Moreover, in system where the number of users is greater than TTE, the AP will never be able to fill the state matrix with non-empty lines and the aforementioned semi-greedy algorithm coincides with the uncoded algorithm which sends only uncoded packets. Hence, we devised an alternative heuristic algorithm, termed modified semi-greedy (MSG). MSG differs from SG in that whenever there is a line in which the TTE is going to expire on the next slot (i.e., TTE = 1) the AP transmits the maximal clique containing the oldest line. The results of the MSG heuristic are also depicted in Figure~\ref{fig4}. Note that MSG is indifferent to the channel conditions and acts identically for any packet loss (Figure~\ref{fig4} left).
Further note that even though both policies rely on the same parameters to make a decision, i.e.,  both perform based on the triplet $\{$oldest line, maximal clique size, number of empty lines$\}$, Aggregation II outperforms the MSG algorithm at all packet loss values. This can be explained by that MSG, while being effective as a simple heuristic algorithm, neglects the channel condition, i.e., MSG provides only a single retransmission opportunity for a packet before it gets obsolete, regardless the loss probability. This is opposed to Aggregation II which effectively \emph{adjusts the policy to the channel packet loss} with no prior knowledge on the packet loss ($p$), \emph{based on the on-line learning}. 
Indeed, the advantage of Aggregation II becomes more prominent at higher packet loss values, as can be seen in Figure~\ref{fig4}.

\ifdouble
\begin{figure}[h!]
\begin{center}
\centerline{\includegraphics[width=25em]{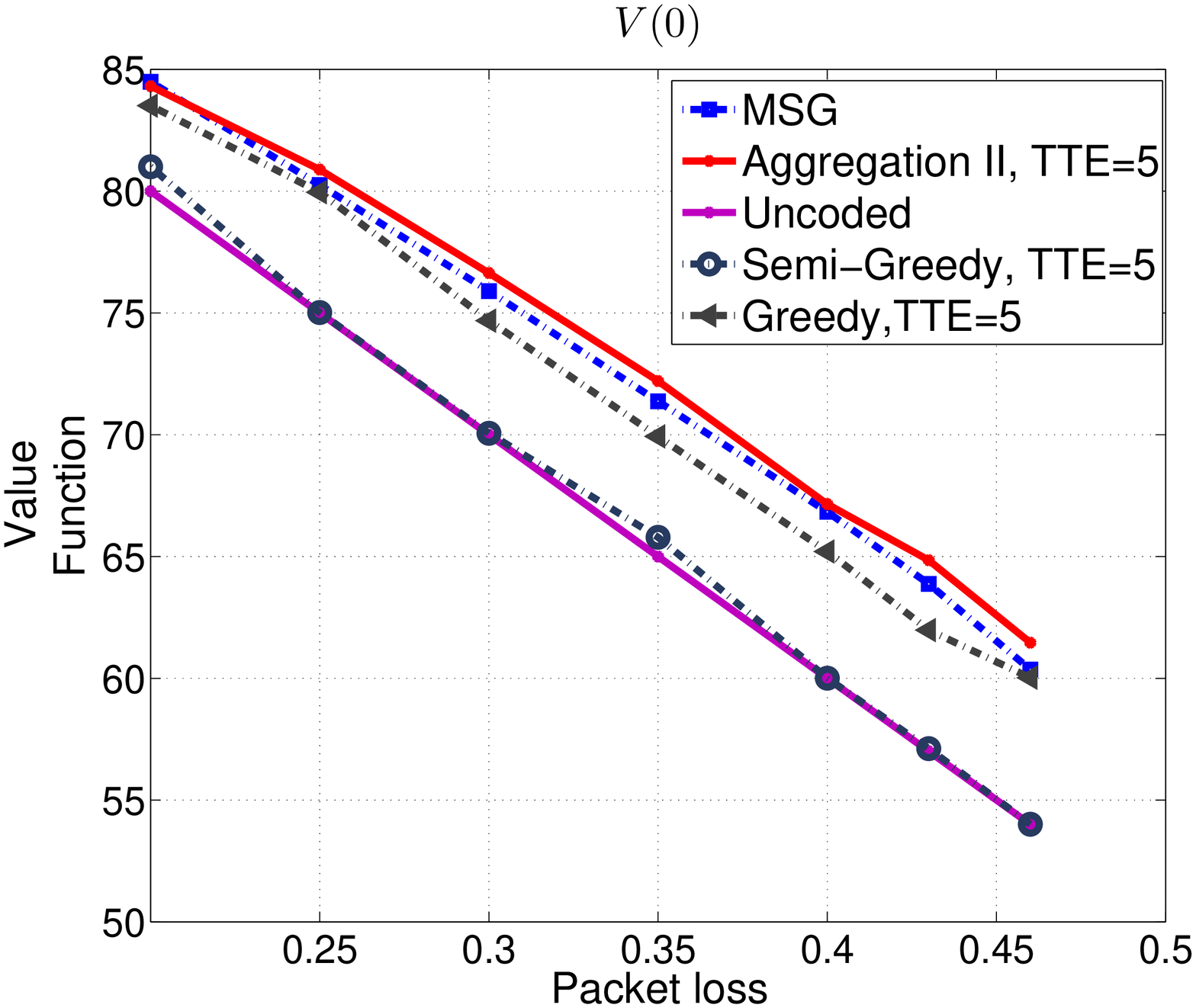}}
\label{fig4}
\caption{\sl\small
Discounted value function comparison.}
\end{center}
\end{figure}

\begin{figure}[h!]
\begin{center}
\centerline{\includegraphics[width=25em]{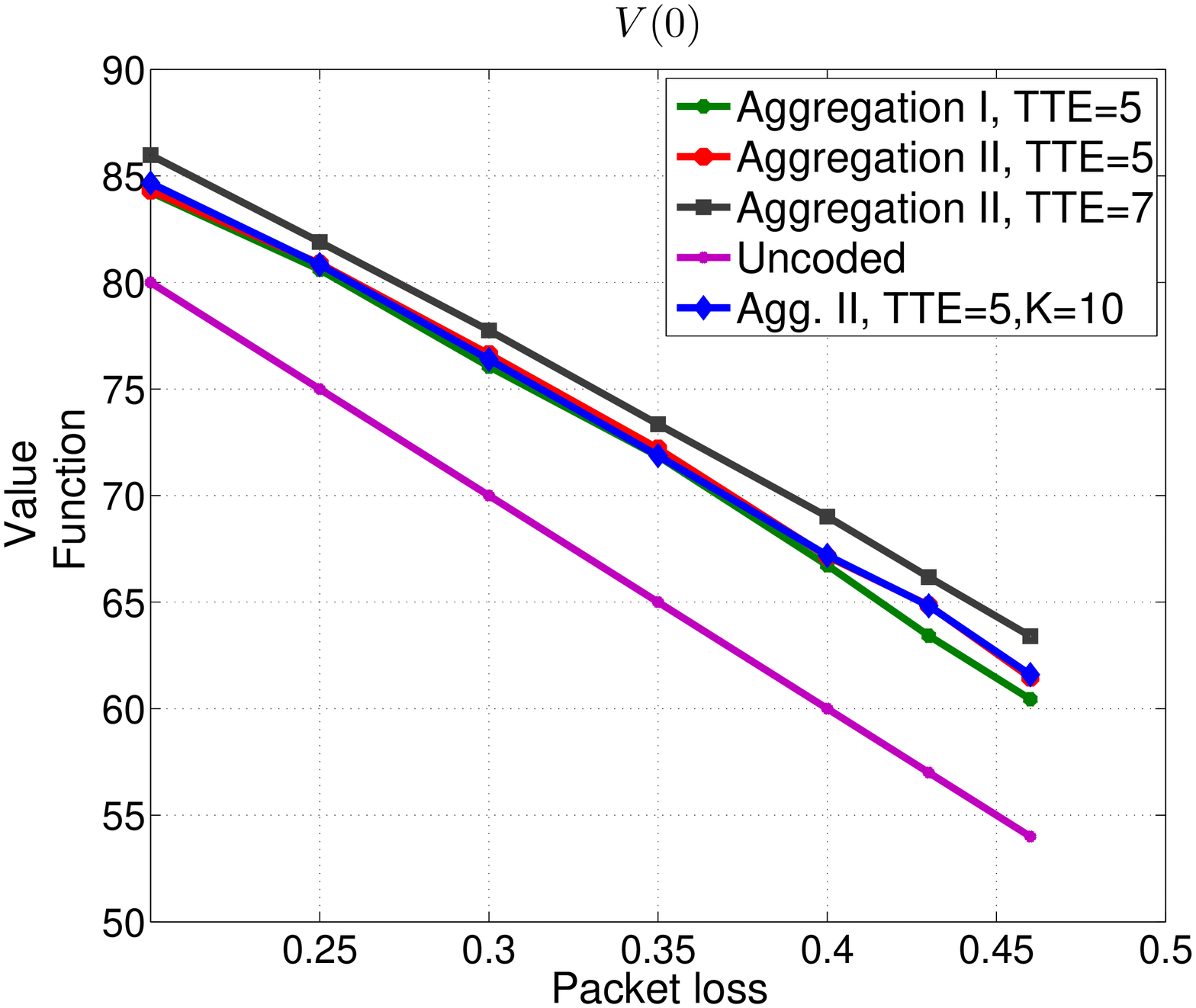}}
\label{fig4}
\caption{\sl\small
Discounted value function comparison.}
\end{center}
\end{figure}

\begin{figure}[h!]
\begin{center}
\centerline{\includegraphics[width=25em]{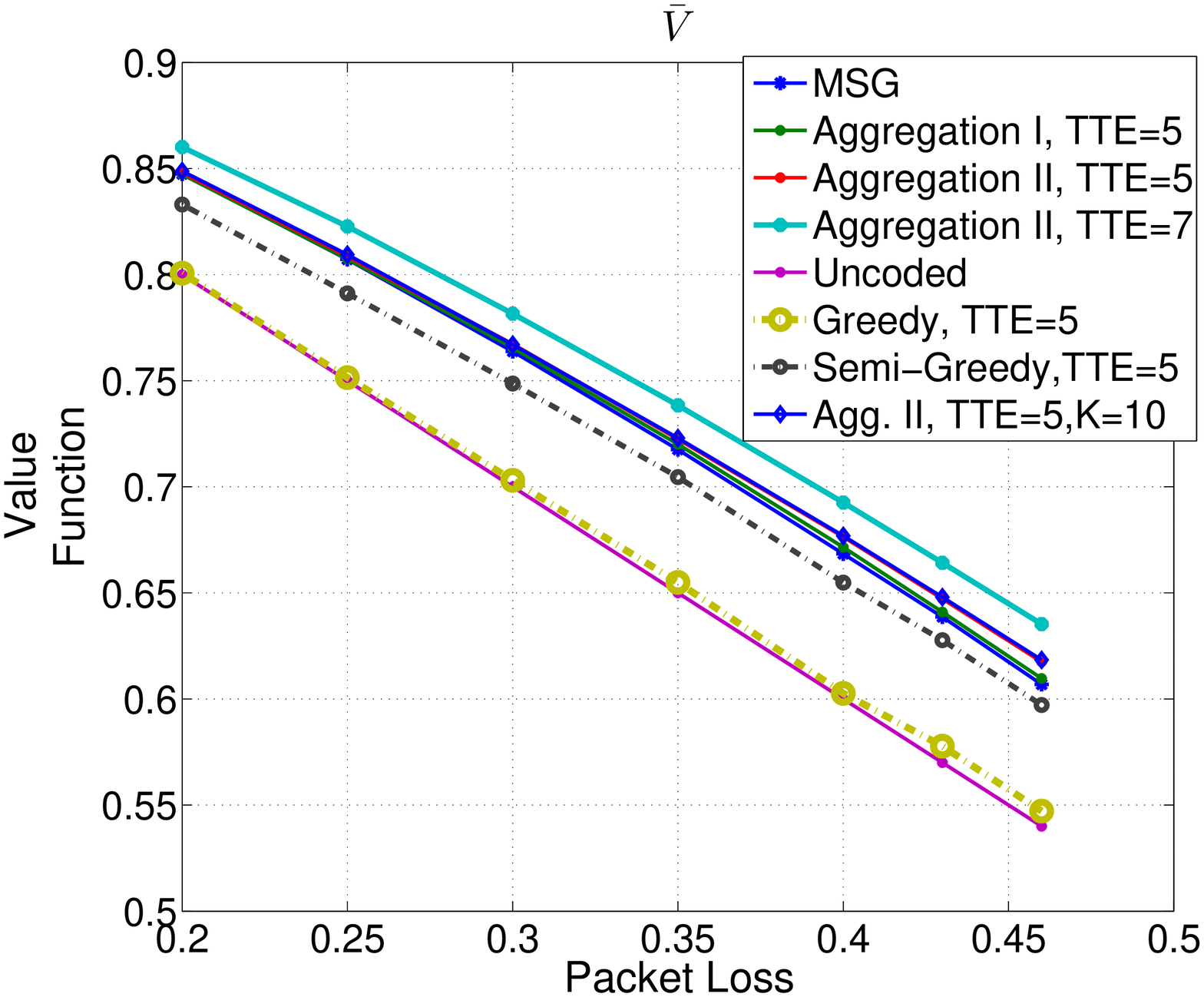}}
\caption{\sl\small
Average cost value function comparison. }\label{fig3}s
\end{center}
\end{figure}
\else
\begin{figure}[h]
	\begin{center}
		{\includegraphics[width=13em]{Res5d_J1.eps}}
		{\includegraphics[width=13em]{Res5d_J2.eps}}
		{\includegraphics[width=12.5em]{Res5s_Ja.eps}}
		\caption{\sl\scriptsize
			Value function comparison. The left and the middle figures show the discounted case. The right figure shows the average cost long run.}\label{fig4}
	\end{center}
\end{figure}
\fi

Next, observe that when the number of users is greater than TTE, the effect of the surplus of the number of users is negligible. This stems from the fact that at most $E=TTE$ lines can have non-zero entries at all times.
Indeed, we see that $K=10$ leads to almost no improvement in performance compared to the $TTE=5$ case (the corresponding lines in the middle graph are almost coincide). 
Hence, we conjecture that for the case where $K>TTE$, further state-space minimization could be done.
However, once one increases the $TTE$ parameter the performance improvement is tangible. These results are seen on the middle graph as well.
Finally we compare the average cost long run simulation results (Figure~\ref{fig4}, right).
Relying on Blackwell optimality, we used the same policies we found for the discounted case. One sees the same performance gradation as for the discounted cost. 

\begin{figure}[h!]
\begin{center}
\ifArX
\includegraphics[width=\textwidth,natwidth=610,natheight=642]{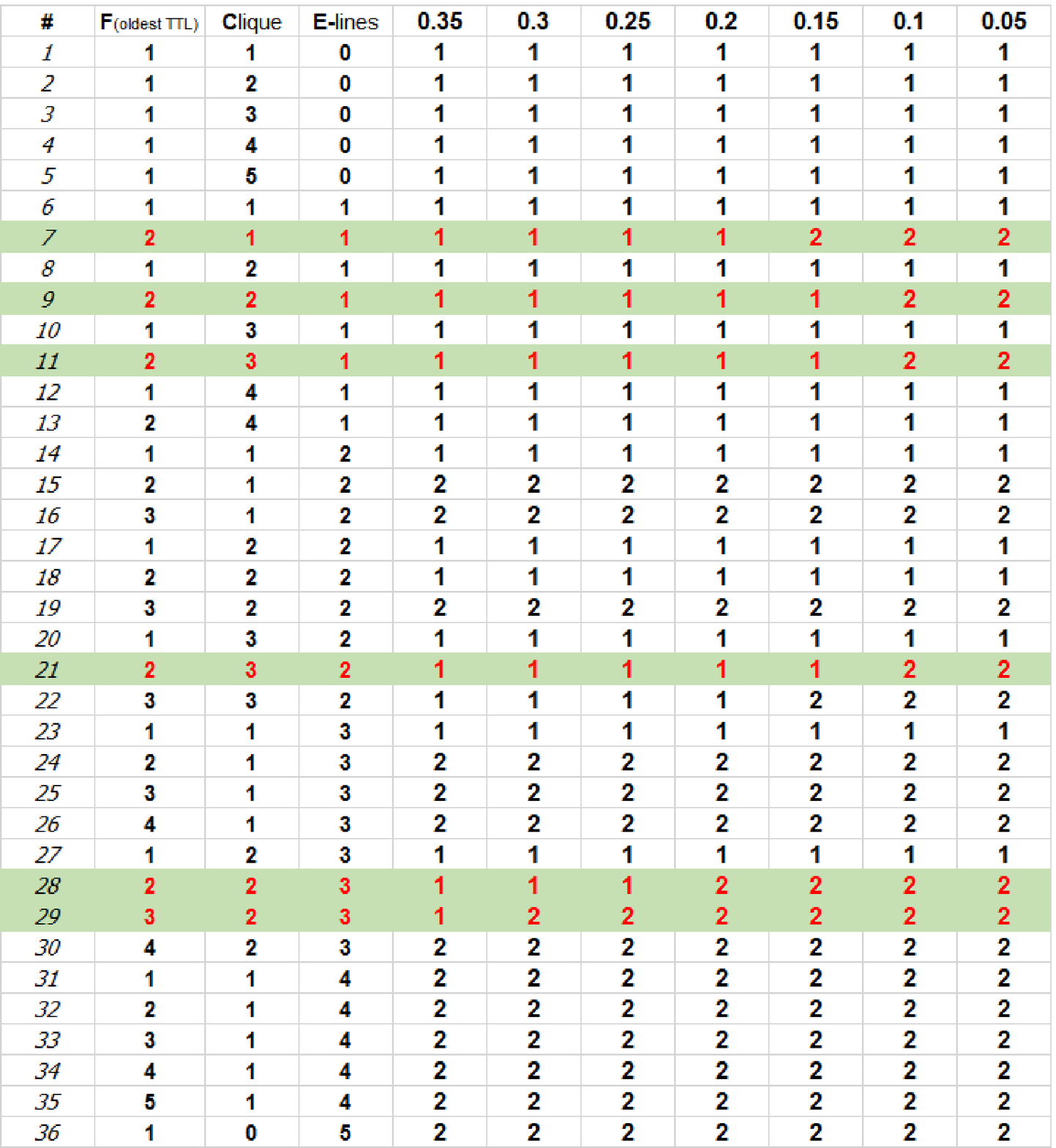}
\else
\includegraphics[width=0.28\textwidth]{T1able5TTL_A5_D.eps}
\fi
\caption{\sl\scriptsize
Approximately optimal policy, for a system with $K=5$ users and $TTE=5$.
$1$ stands for sending the clique containing the oldest line, while $2$ stands for sending a random empty line.
Observe the dependence of the policy on the packet loss, e.g. in states $7$,$9$,$11$,$21$,$28,29$ (These states are marked in red).
The impact of the parameter $F$ can be seen from states $\{F,3,2\}$,(states $20,21,22$), for example. Note that the clique is always sent in the cases where $F=1$, i.e., the oldest line in this clique is about to expire. In the cases where $F>1$, the policy depends on the packet loss, and generally tends to change to $2$ once $p$ is greater and/or $F$ is higher.\vspace*{-15pt}}\label{fig1} 
\end{center}
\end{figure}


\appendix

\subsection{Proof of Proposition~\ref{prop1}}\label{app:proof}

\begin{proof}
We prove by constructing a reward function $\hat\scrR=\{\hat r(\hat s',\hat a,\hat s)\}$.
Let the rewards associated with \MS{policy restriction and aggregated originating state} be $\bar r(s',\bar a,\bar s)$.
\MS{Observe that $\sum_{s''\in\bar s}P\Big(\bar r(s',\bar a,\bar s)=r(s',\bar a,s'')\Big)=1$. }
\MS{Hence},
{\small{
\begin{align}\label{eq:Er1}
&\E \bar r(s',\bar a,\bar s)=\sum r(s',\bar a,s'')P\Big(\bar r(s',\bar a,\bar s)=r(s',\bar a,s'')\Big)=\sum_{s''\in\bar s}[r(s',\bar a,s'')]p^{\bar\pi}(s''|\bar s),
\end{align}}}
Partitioning all states in $\scrS$ to the aggregated states, we have:
{\small{
\begin{align}\label{eq:partitioning1}
& \bar r(\bar s,\bar a)=\sum_{\bar s'}\bar r(s',\bar a,\bar s)p(s'|\bar s,\bar a)=\sum_{\bar s'}\sum_{s'\in\bar s'}\bar r(s',\bar a,\bar s)p(s'|\bar s,\bar a).
\end{align}}}
{\small{
\begin{align}\label{eq:partitioning}
& \bar r(\bar s,\bar a)=\sum_{\bar s'}\bar r(s',\bar a,\bar s)p(s'|\bar s,\bar a)=\sum_{\bar s'}\Big(\sum_{s''\in\bar s}[r(s',\bar a,s'')]p_{\bar a}(s''|\bar s)\Big)p(s'|\bar s,\bar a)=\sum_{\bar s'}\sum_{s'\in\bar s'}\bar r(s',\bar a,\bar s)p(s'|\bar s,\bar a).
\end{align}}}
Similarly to $\bar r(\bar s,\bar a)$ in $\scrM_1$, define $\hat r(\hat s,\hat a)$ in $\hat\scrM$:
{\small{
\begin{equation}\label{eq:R}
\hat r(\hat s,\hat a)=\sum_{\hat s'}\hat r(\hat s',\hat a,\hat s)p(\hat s'|\hat s,\hat a)
\end{equation}}}
Thus, we wish to find $\hat r(\hat s',\hat a,\hat s)$ such that
{\small{
\begin{equation}\label{2}
\bar r(\bar s',\bar a,\bar s)=\hat r(\hat s',\hat a,\hat s).
\end{equation}}}
Since both the summation in\MS{~\eqref{eq:R}} and the outer summation in\MS{~\eqref{eq:partitioning}} are over all aggregated states,~\eqref{2} will be achieved by taking:
\[
\hat r(\hat s',\hat a,\hat s)p(\hat s'|\hat s,\hat a)=\sum_{s'\in\bar s'}\bar r(s',\bar a,\bar s)p(s'|\bar s,\bar a).
\]
That is,
{\small{
\begin{equation}\label{eq:Rfinal}
\hat r(\hat s',\hat a,\hat s)=\frac{\sum_{s'\in\bar s'}\Big(\bar  r(s',\bar a,\bar s)\Big)p(s'|\bar s,\bar a)}{p(\bar s'|\bar s,\bar a)}
\end{equation}}}
with the mapping $\hat s\sim\bar s$ and $\hat a\sim\bar a$. \MS{Note that one should use~\eqref{eq:Er1} in~\eqref{eq:Rfinal}. }
Hence, we have the desired result:
{\small{
\[
V_{\hat\scrU}(\hat s_0)=\sum_{n=0}^\infty\gamma^n\hat r_n(\hat s_{n+1},\hat a,\hat s_n)=\sum_{n=0}^\infty\gamma^n\bar r_{n}(\bar s_{n+1},\bar a,\bar s_n)=V_{\bar \scrU}(\bar s_0)
\]}}
\MS{\qed}
\end{proof}
\begin{example}\label{xmpl:1}
The following demonstrates state aggregation (as it was defined by Aggregation I in Section~\ref{sec:st-agg}) and results of Proposition~\ref{prop1}.
Consider the case of $4$ users. Each line holds the packets of user $i$. We exemplify the detailed states where $L=3$, $E=1$. These states are aggregated into the state denoted by $\bar s_{3,1}$. 
Possible cliques are demonstrated in the detailed states denoted $s_1,s_2,s_3,s_4$ below. Observe that these states contain only {\it minimal} number of $1$-s.
\ifdouble
{\small
\[
s_1 =
\begin{pmatrix}
0 & 1 & 1 & 0  \\
1 & 0 & 1 & 0  \\
1 & 1 & 0 & 0   \\
0 & 0 & 0 & 0
\end{pmatrix}
\quad
s_2 =
\begin{pmatrix}
0 & 0 & 0 & 0  \\
0 & 0 & 1 & 1  \\
0 & 1 & 0 & 1   \\
0 & 1 & 1 & 0
\end{pmatrix}
\]
}
{\small
\[
s_3 =
\begin{pmatrix}
0 & 0 & 1 & 1  \\
0 & 0 & 0 & 0  \\
1 & 0 & 0 & 1   \\
1 & 0 & 1 & 0
\end{pmatrix}
\quad
s_4 =
\begin{pmatrix}
0 & 1 & 0 & 1  \\
1 & 0 & 0 & 1  \\
0 & 0 & 0 & 0   \\
1 & 1 & 0 & 0
\end{pmatrix}
\]
}
\else
\par \vspace*{-1.5em}
{\tiny\[
\text{    }
\]}
{{\tiny
	\[
	s_1 =
	\begin{pmatrix}
	0 & 1 & 1 & 0  \\
	1 & 0 & 1 & 0  \\
	1 & 1 & 0 & 0   \\
	0 & 0 & 0 & 0
	\end{pmatrix}
	\quad
	s_2 =
	\begin{pmatrix}
	0 & 0 & 0 & 0  \\
	0 & 0 & 1 & 1  \\
	0 & 1 & 0 & 1   \\
	0 & 1 & 1 & 0
	\end{pmatrix}
	\quad
	s_3 =
	\begin{pmatrix}
	0 & 0 & 1 & 1  \\
	0 & 0 & 0 & 0  \\
	1 & 0 & 0 & 1   \\
	1 & 0 & 1 & 0
	\end{pmatrix}
	\quad
	s_4 =
	\begin{pmatrix}
	0 & 1 & 0 & 1  \\
	1 & 0 & 0 & 1  \\
	0 & 0 & 0 & 0   \\
	1 & 1 & 0 & 0
	\end{pmatrix}
	\]}}
\fi
See that in $s_1$, there are $8$ additional options for the last column. In particular, observe the following four states with the same empty line and the same clique as in $s_1$.
\ifdouble
{\tiny\[s_5 =
\begin{pmatrix}
0 & 1 & 1 & 0  \\
1 & 0 & 1 & 1  \\
1 & 1 & 0 & 0   \\
0 & 0 & 0 & 0
\end{pmatrix}
\quad
s_6=
\begin{pmatrix}
0 & 1 & 1 & 1  \\
1 & 0 & 1 & 0  \\
1 & 1 & 0 & 0  \\
0 & 0 & 0 & 0
\end{pmatrix}
\]}
{\tiny
\[
s_7=
\begin{pmatrix}
0 & 1 & 1 & 0  \\
1 & 0 & 1 & 0  \\
1 & 1 & 0 & 1   \\
0 & 0 & 0 & 0
\end{pmatrix}
\quad
s_8=
\begin{pmatrix}
0 & 1 & 1 & 1  \\
1 & 0 & 1 & 0  \\
1 & 1 & 0 & 1   \\
0 & 0 & 0 & 0
\end{pmatrix}
\]}
\else
\par \vspace*{-1.5em}
{\tiny\[
\text{    }
\]}
{{\tiny\[s_5 =
	\begin{pmatrix}
	0 & 1 & 1 & 0  \\
	1 & 0 & 1 & 1  \\
	1 & 1 & 0 & 0   \\
	0 & 0 & 0 & 0
	\end{pmatrix}
	\quad
	s_6=
	\begin{pmatrix}
	0 & 1 & 1 & 1  \\
	1 & 0 & 1 & 0  \\
	1 & 1 & 0 & 0  \\
	0 & 0 & 0 & 0
	\end{pmatrix}
	\quad
	s_7=
	\begin{pmatrix}
	0 & 1 & 1 & 0  \\
	1 & 0 & 1 & 0  \\
	1 & 1 & 0 & 1   \\
	0 & 0 & 0 & 0
	\end{pmatrix}
	\quad
	s_8=
	\begin{pmatrix}
	0 & 1 & 1 & 1  \\
	1 & 0 & 1 & 0  \\
	1 & 1 & 0 & 1   \\
	0 & 0 & 0 & 0
	\end{pmatrix}
	\]}}
\fi
The same holds for $s_2,s_3$ and $s_4$. Concluding, the state $\bar s_{3,1}$ aggregates $32$ detailed states.

There are two possible actions, denote them $\bar a=1$ and $\bar a=2$, which stand respectively for transmitting the clique and transmitting (the only) empty line. 
Note that the encoded message for $s_1$ contains the bits $1,2,3$, for $s_2$ it contains packets $2,3,4$, for $s_3$ it contains packets $1,3,4$ and for $s_4$ it contains packets $1,2,4$.
The probability $p(s_i|\bar s_{3,1})$ stand for the probability to be in a specific detailed state which belongs to the aggregated state  $\bar s_{3,1}$, (we omit the superscript of the policy in this example).  
The rest of the example concentrates on the state $s_5\in\bar s_{3,1}$ and action $\bar a=1$, i.e., transmission of the clique. Assume the action results in the detailed state $s_a$.
\par \vspace*{-1.5em}
{\tiny\[
\text{    }
\]}
{\tiny
\[s_a =
\begin{pmatrix}
0 & 0 & 0 & 0  \\
1 & 0 & 1 & 1  \\
0 & 0 & 0 & 0   \\
0 & 0 & 0 & 0
\end{pmatrix}
\quad
s_9 =
\begin{pmatrix}
0 & 0 & 0 & 0  \\
1 & 0 & 1 & 1  \\
0 & 1 & 0 & 1   \\
0 & 1 & 1 & 0
\end{pmatrix}
\]}
Clearly, $s_a\in\bar s_{1,3}$. Further, assume equal packet loss probability denoted by $q$. 
The aforementioned transition occurs with probability $p(s_a|\bar a=1,s_5)=q(1-q)^2$. That is, two of the users in the clique ($1$ and $3$) successfully decoded the encoded bit, while user $2$ failed to do so. See that the same transition can happen from state $s_9$. 
That is, the clique containing encoding of $2,3,4$ was transmitted, and user $2$ failed to decode. This transition occurs with probability $p(s_a|\bar a=1,s_9)=q(1-q)^2$ as well.
We sum up over all such detailed states (according to Appendix~\ref{app:proof}):
{\small
\[
p(s_a|\bar a=1,\bar s=\bar s_{3,1})=\sum_{s_i\in\bar s_{3,1}}p(s_a|\bar a=1,s_i)p(s_i|\bar s_{3,1}),
\]}
This summation counts over all $32$ detailed states in $\bar s_{3,1}$.
Clearly, some of the probabilities, e.g., $p(s_a|\bar a=1,s_2)$ are zero, hence do not contribute to the summation. 
For calculation convenience, we assume convention that in these cases $r(s_a,\bar a=1,s_i)=0$.
We calculate the average reward associated with the transition from $s_{3,1}$ to $s_a$, according to~\eqref{eq:Er1}:
{\small \[\E\bar r (s_a,\bar a=1,\bar s_{3,1})=\sum_{s_i\in\bar s_{3,1}}r(s_a,\bar a=1,s_i)p(s_i|\bar s_{3,1})
\]}
Note that transition to state $s_a$, acting $\bar a=1$ from $\bar s_{3,1}$, is only possible when $2$ of $3$ encoded packets were successfully decoded. Thus, the reward for these cases is equal to $2$, while for the other cases it is zero.
Let  the subset $\bar\scrS'\in\bar\scrS$ to contain the possible next (aggregated) states, assuming the clique size in the previous state was $2$. Namely, $\bar\scrS'=\{\bar s_{3,1},\bar s_{2,2},\bar s_{1,3},\bar s_{0,4}\}$, where the components  refer to the events of successfully decoding of $0,1,2$ and $3$ packets correspondingly.
In order to calculate $\bar r (\bar s_{3,1},\bar a)$, we 
first summarize over all possible outcomes
{\small{$\bar r (\bar s_{3,1},\bar a=1)=\sum_{s_i}\bar r (s_i,\bar a=1,\bar s_{3,1})p(s_i|\bar a=1,\bar s_{3,1})
$.}} Substituting the expected values and the probabilities we found above, and arranging according to the aggregated states, we have:
\ifdouble
{\small
\begin{align*}
& \bar r (\bar s_{3,1},\bar a=1)= \\
& \sum_{s_i\in\bar s_{3,1}}\E\bar r (s_i,\bar a=1,\bar s_{3,1})p(s_i|\bar a=1,\bar s_{3,1})\\
&+\sum_{s_i\in\bar s_{2,2}}\E\bar r (s_i,\bar a=1,\bar s_{3,1})p(s_i|\bar a=1,\bar s_{3,1}) +\\
& \sum_{s_i\in\bar s_{0,3}}\E\bar r (s_i,\bar a=1,\bar s_{3,1})p(s_i|\bar a=1,\bar s_{3,1})+\\
&\sum_{s_i\in\bar s_{0,4}}\E\bar r (s_i,\bar a=1,\bar s_{3,1})p(s_i|\bar a=1,\bar s_{3,1})=\\
&\sum_{\bar s\in\bar S'}\sum_{s_i\in\bar s}\E\bar r (s_i,\bar a=1,\bar s_{3,1})p(s_i|\bar a=1,\bar s_{3,1})
\end{align*}}
\else
{\small
	\begin{align*}
	& \bar r (\bar s_{3,1},\bar a=1)=\\
	& \sum_{s_i\in\bar s_{3,1}}\E\bar r (s_i,1,\bar s_{3,1})p(s_i|1,\bar s_{3,1})
	+\sum_{s_i\in\bar s_{2,2}}\E\bar r (s_i,1,\bar s_{3,1})p(s_i|1,\bar s_{3,1}) +\\
	 &\sum_{s_i\in\bar s_{1,3}}\E\bar r (s_i,1,\bar s_{3,1})p(s_i|1,\bar s_{3,1})+
	\sum_{s_i\in\bar s_{0,4}}\E\bar r (s_i,1,\bar s_{3,1})p(s_i|1,\bar s_{3,1})= \\
  &\sum_{\bar s\in\bar\scrS'}\sum_{s_i\in\bar s}\E\bar r (s_i,1,\bar s_{3,1})p(s_i|1,\bar s_{3,1})=\sum_{\bar s\in\bar\scrS'}\sum_{s_i\in\bar s}\Big(\sum_{s_j\in\bar s_{3,1}}\bar r (s_i,1,s_{j})p(s_i|\bar s_{3,1})\Big)p(s_i|1,\bar s_{3,1})
	\end{align*}}
\fi
We now turn to the induced MDP $\bar\scrM$. 
Denote $\hat s=\hat s_{3,1}$ and $\hat a=1$. We find the reward associated with transition to $\hat s_{1,3}$, $\hat r(\hat s_{0,3},\hat a=1,\hat s_{3,1})$.
Equate component-wise $\hat r(\hat s,\hat a)$ and $\bar r (\bar s_{3,1},\bar a=1)$ as follows:
\ifdouble
{\small\begin{align*}
& r(\hat s_{0,3},\hat a=1,\hat s_{3,1})\hat p(\hat s_{0,3}|\hat s_{3,1},\hat a=1)=\\
&\sum_{s_i\in\bar s_{0,3}}\E\bar r (s_i,\bar a=1,\bar s_{3,1})p(s_i|\bar a=1,\bar s_{3,1})
\end{align*}}
\else
{\small\begin{align*}
	& r(\hat s_{1,3},\hat a=1,\hat s_{3,1})p(\hat s_{1,3}|\hat s_{3,1},\hat a=1)=
	\sum_{s_i\in\bar s_{1,3}}\E\bar r (s_i,\bar a=1,\bar s_{3,1})p(s_i|\bar a=1,\bar s_{3,1})
	\end{align*}}
\fi
It is left to calculate the probability $p(\hat s_{1,3}|\hat s_{3,1},\hat a=1)$. 
\ifdouble
{\small\begin{align*}
& p(\hat s_{1,3}|\hat s_{3,1},\hat a=1)=p(\bar s_{1,3}|\bar s_{3,1},\bar a=1)=\\
&\sum_{s'\in\bar s_{1,3}}\sum_{s\in\bar s_{3,1}}p(s'|\bar a=1,s)p(s|\bar s_{3,1})
\end{align*}}
\else
{\small\begin{align*}
	& p(\hat s_{1,3}|\hat s_{3,1},\hat a=1)=p(\bar s_{1,3}|\bar s_{3,1},\bar a=1)=
	\sum_{s'\in\bar s_{1,3}}\sum_{s\in\bar s_{3,1}}p(s'|\bar a=1,s))p(s|\bar s_{3,1})
	\end{align*}}
\fi
Finally, the solutions for all possible $\hat r(\hat s',\hat a=1,\hat s_{3,1})$ are found from
\ifdouble
{\small\[
r_{\hat s_{0,3},\hat a=1,\hat s_{3,1}}=\frac{\sum_{s_i\in\bar s_{0,3}}\E\bar r (s_i,\bar a=1,\bar s_{3,1})p(s_i|1,\bar s_{3,1})}{\sum_{s'\in\bar s_{0,3}}\sum_{s\in\bar s_{3,1}}p(s'|\bar a=1,s)p(s|\bar s_{3,1})}
\]}
{\small\[
r_{\hat s_{0,4},\hat a=1,\hat s_{3,1}}=\frac{\sum_{s_i\in\bar s_{0,4}}\E\bar r (s_i,\bar a=1,\bar s_{3,1})p(s_i|1,\bar s_{3,1})}{\sum_{s'\in\bar s_{0,4}}\sum_{s\in\bar s_{3,1}}p(s'|\bar a=1,s)p(s|\bar s_{3,1})}
\]}
{\small\[
r_{\hat s_{2,2},\hat a=1,\hat s_{3,1}}=\frac{\sum_{s_i\in\bar s_{2,2}}\E\bar r (s_i,\bar a=1,\bar s_{3,1})p(s_i|1,\bar s_{3,1})}{\sum_{s'\in\bar s_{2,2}}\sum_{s\in\bar s_{3,1}}p(s'|\bar a=1,s)p(s|\bar s_{3,1})}
\]}
{\small\[
r_{\hat s_{3,1},\hat a=1,\hat s_{3,1}}=\frac{\sum_{s_i\in\bar s_{3,1}}\E\bar r (s_i,\bar a=1,\bar s_{3,1})p(s_i|1,\bar s_{3,1})}{\sum_{s'\in\bar s_{3,1}}\sum_{s\in\bar s_{3,1}}p(s'|\bar a=1,s)p(s|\bar s_{3,1})}
\]}
\else
{\small\[
	r_{\hat s_{1,3},\hat a=1,\hat s_{3,1}}=\frac{\sum_{s_i\in\bar s_{1,3}}\E\bar r (s_i,\bar a=1,\bar s_{3,1})p(s_i|1,\bar s_{3,1})}{\sum_{s'\in\bar s_{1,3}}\sum_{s\in\bar s_{3,1}}p(s'|\bar a=1,s)p(s|\bar s_{3,1})}
\quad
	r_{\hat s_{0,4},\hat a=1,\hat s_{3,1}}=\frac{\sum_{s_i\in\bar s_{0,4}}\E\bar r (s_i,\bar a=1,\bar s_{3,1})p(s_i|1,\bar s_{3,1})}{\sum_{s'\in\bar s_{0,4}}\sum_{s\in\bar s_{3,1}}p(s'|\bar a=1,s)p(s|\bar s_{3,1})}
	\]}
{\small\[
	r_{\hat s_{2,2},\hat a=1,\hat s_{3,1}}=\frac{\sum_{s_i\in\bar s_{2,2}}\E\bar r (s_i,\bar a=1,\bar s_{3,1})p(s_i|1,\bar s_{3,1})}{\sum_{s'\in\bar s_{2,2}}\sum_{s\in\bar s_{3,1}}p(s'|\bar a=1,s)p(s|\bar s_{3,1})}
	\quad
	r_{\hat s_{3,1},\hat a=1,\hat s_{3,1}}=\frac{\sum_{s_i\in\bar s_{3,1}}\E\bar r (s_i,\bar a=1,\bar s_{3,1})p(s_i|1,\bar s_{3,1})}{\sum_{s'\in\bar s_{3,1}}\sum_{s\in\bar s_{3,1}}p(s'|\bar a=1,s)p(s|\bar s_{3,1})}
	\]}
\fi
Note that $p(s|\bar s_{3,1})$ are policy dependent and in order to be found, the Markov chain associated with the MDP should be entirely solved. As it is explained throughout the paper, we circumvent this difficulty by reinforcement learning.
This finishes the example.
\end{example}

\ifArX
\subsection{Proof of Bounds}\label{app:bounds}

We prove low and upper bounds on the slope of $V(s)$, discounted infinite horizon cost.
Denoting $p^e_{k}$, the probability to increase $L(s)$ from $k$ to $k+1$ when transmitting an empty line,
see that $p^e_{k}<p$, that is, incrementing the clique is conditioned on the transmission being unsuccessful.
Denote by $p^c_{k,i}$, $0\leq i\leq k$, the transition probability from state $k$ from to state $i$, when acting by the transmission of the clique (i.e. $a=1$).
Note that $p^k_{i}$ is formally given by $p^c_{k,i}=p(\bar s'=i|\bar s=k,a=1)$
Define operator $T$, corresponding to the Bellman equation, acting on $V$ 
\ifdouble 
\par \vspace*{-1.5em} {\small
\begin{align}
& TV(k)=\max\{[p^e_{k}\gamma V(k+1)+(1-p^e_k)\gamma V(k)+(1-p)],\nonumber\\
&[\sum_{i=0}^kp^c_{k,i}\gamma V(i)+(1-p)k]\},\label{4}
\end{align}
}
\else
\par \vspace*{-1.5em} {\small
	\begin{align}
	& TV(k)=\max\{[p^e_{k}\gamma V(k+1)+(1-p^e_k)\gamma V(k)+(1-p)],
	[\sum_{i=0}^kp^c_{k,i}\gamma V(i)+(1-p)k]\},\label{4}
	\end{align}
}
\fi
with boundary conditions
\ifdouble
\par \vspace*{-1.5em}{\small{
\begin{align*}
 TV(0) &=\{[p^e_{0}\gamma V(1)+(1-p^e_0)\gamma V(0)+(1-p)]\}, \\
 TV(K) &=\sum_{i=0}^Kp^c_{K,i}\gamma V(i)+(1-p)K.
\end{align*} }}
\else
\par \vspace*{-1.5em}{\small{
		\begin{align*}
		TV(0) =\{[p^e_{0}\gamma V(1)+(1-p^e_0)\gamma V(0)+(1-p)]\}, \quad
		TV(K) =\sum_{i=0}^Kp^c_{K,i}\gamma V(i)+(1-p)K.
		\end{align*} }}
\fi
The immediate rewards are explained as follows. The reward for transmission of an empty line is given by the probability of a successful transmission, that is $1-p$. In the case a clique of size $k$ is transmitted, we have $k$ potential i.i.d rewards, which gives $(1-p)k$. 
To simplify the notation, denote 
$\tilde S(k)=\gamma\sum_{i=0}^kp^c_{k,i}V(k-i)+(1-p)k$ and  $\tilde E(k)=p^e_{k}\gamma V(k+1)+(1-p^e_k)\gamma V(k)+(1-p)$.

Let $\calS$ be the set of functions from $\{0,1,\ldots,K\}$ to $\R$ that are nondecreasing, and have slope bounded from above by $d_k$, that is\par \vspace*{-1.5em}
{\small
\begin{equation}\label{21}
V(k+1)-V(k)\le d,\qquad k\in\{0,1,\ldots,K-1\},
\end{equation} }
and bounded from below as follows:\par \vspace*{-1.5em}
\ifdouble
{\small
\begin{align}\label{19}
& V(k)-V(k-i)\geq i-c, \text{  where  } \\
& i\in\{1,\ldots,K-1\},\;k\in\{i,i+1,\ldots,K\}.\no
\end{align}}
\else
{\small
	\begin{align}\label{19}
	& V(k)-V(k-i)\geq i-c, \text{   where  } 
	 i\in\{1,\ldots,K-1\},\;k\in\{i,i+1,\ldots,K\}.
	\end{align}}
\fi
Lemma~\ref{l2} below asserts that $T$ preserves $\calS$, and acts on it as a strict contraction. The combination of these two assertion implies that $V(s)$ is in $\calS$ (see the discussion below), that is, it possesses the corresponding properties~\eqref{19} and~\eqref{21}. 
\begin{lemma}\label{l2}
	There exist constants $c$ and $d$, such that one has $T\calS\subset\calS$. Moreover, there exists a constant $\alpha\in(0,1)$ such that
	\[
	\|TU-TW\|\le\alpha\|U-W\| \,\text{ for every }\, U,W\in\calS.
	\]
\end{lemma}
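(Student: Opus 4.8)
The plan is to prove the two assertions separately, since the contraction property is essentially routine while the invariance $T\calS\subset\calS$ carries the real content. For the contraction I would proceed exactly as in the standard theory of discounted dynamic programming. Regarding $\tilde E$ and $\tilde S$ as functions of the underlying value function, each is a $\gamma$-scaled convex combination of its values (the coefficients $p^e_k,1-p^e_k$ and $\{p^c_{k,i}\}_i$ being nonnegative and summing to at most one) plus an immediate-reward term that does not depend on the value function at all. Hence for each fixed $k$ the maps $U\mapsto\tilde E(k)$ and $U\mapsto\tilde S(k)$ are $\gamma$-Lipschitz in the sup norm, and since $|\max\{a,b\}-\max\{a',b'\}|\le\max\{|a-a'|,|b-b'|\}$, the operator $TV(k)=\max\{\tilde E(k),\tilde S(k)\}$ inherits the same constant. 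Taking the supremum over $k\in\{0,\ldots,K\}$ gives $\|TU-TW\|\le\gamma\|U-W\|$, so $\alpha=\gamma\in(0,1)$ works; the boundary cases $k=0$ and $k=K$ are handled by the one-branch boundary equations and present no difficulty.

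For the invariance, the key reduction is that the pointwise maximum preserves each of the three defining properties of $\calS$. If two functions are nondecreasing, so is their max. If both satisfy the upper slope bound \eqref{21}, then selecting $f$ to be the maximizer at $k+1$ gives $\max\{f(k+1),g(k+1)\}-\max\{f(k),g(k)\}\le f(k+1)-f(k)\le d$, so the max satisfies \eqref{21}; and selecting the maximizer at the \emph{smaller} argument shows that the lower bound \eqref{19} is preserved as well. Thus it suffices to show that, for $V\in\calS$, each of $\tilde E$ and $\tilde S$ is nondecreasing and obeys \eqref{21} and \eqref{19}.

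I would then analyze the two branches in turn. For the clique branch $\tilde S(k)=\gamma\sum_i p^c_{k,i}V(k-i)+(1-p)k$, the affine term supplies a deterministic increment of $(1-p)$ per unit of $k$, while the discounted sum is a $\gamma$-scaled average of values of $V$ that is controlled by the monotonicity and the slope bounds of $V$; crucially, the structural fact established earlier — that a clique transmission never increases the clique size — is what lets me compare $\tilde S(k+1)$ with $\tilde S(k)$ without reading the $k$-dependent weights termwise. For the empty-line branch $\tilde E(k)=\gamma\big[p^e_kV(k+1)+(1-p^e_k)V(k)\big]+(1-p)$, I would use that $V$ is nondecreasing so the bracket lies between $\gamma V(k)$ and $\gamma V(k+1)$, together with the interpretation of $p^e_k$ as the (sub-$p$) probability of raising the clique by exactly one, to bound the $k$-dependence of the increments. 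Feeding both branches through the max reduction above then yields $TV\in\calS$.

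The main obstacle, and the step that pins down the constants, is making the bounds \emph{self-consistent} under $T$: one must choose $d$ and $c$ so that applying $T$ to a function with upper slope at most $d$ and lower bound governed by $c$ returns a function with the \emph{same} $d$ and $c$. This amounts to solving the one-step inequalities the increments of $\tilde E$ and $\tilde S$ impose; schematically, the deterministic $(1-p)$ increment together with the discounted propagation forces a relation of the form $d\ge(1-p)+\gamma d$, i.e. $d=\tfrac{1-p}{1-\gamma}$ (the same quantity that appears as the uncoded value in Section~\ref{sec:Impl}), and an analogous geometric balance, reflecting that one extra clique member is worth one delivered packet discounted over time, fixes $c$. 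A secondary difficulty is purely the $k$-dependence of $p^e_k$ and $p^c_{k,i}$, which prevents a naive termwise monotonicity argument and must be absorbed into the structural properties of the transmission dynamics rather than into the weights themselves.
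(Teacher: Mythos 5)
Your reduction of the invariance to a per-branch analysis (the pointwise max preserves monotonicity and both slope bounds \eqref{21}, \eqref{19}) is sound and is in substance what the paper does in its four-case analysis, and your contraction argument via $\gamma$-Lipschitzness of each branch is the standard one the paper simply cites. The genuine gap is in the step you yourself identify as the crux: pinning down self-consistent constants. The relation $d\ge(1-p)+\gamma d$, hence $d=\tfrac{1-p}{1-\gamma}$, is not what the clique branch imposes. That recursion would hold if the post-transmission state from clique size $k+1$ were always one larger than from clique size $k$; in fact a clique transmission \emph{collapses} the clique, and the increment of $\tilde S$ is (immediate reward growing linearly as $(1-p)k$) plus (a drop in continuation value coming from the redistribution over states $0,\ldots,k$). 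Controlling that drop is exactly where the $k$-dependent weights enter: the paper's Lemma~\ref{lem5} identifies $p^c_{k,i,0}=p^i(1-p)^{k-i}\binom{k}{i}$, so that $\sum_i i\,p^c_{k,i,0}=pk$, and this is used termwise in the bounds \eqref{23}--\eqref{24}. Moreover the upper slope bound for $\tilde S$ cannot be closed using \eqref{21} alone: the paper's Lemma~\ref{lem4} anchors $\tilde S(k)$ from above at $V_0$ and from below at $V_{k-1}$ and then invokes the \emph{lower} bound \eqref{19} to produce the $-\gamma(k-1)$ term that offsets the linearly growing reward. The two bounds are therefore coupled, and the resulting $(c,d)$ solve the $2\times2$ system \eqref{25c}--\eqref{25d}, yielding a $d$ that stays bounded (essentially below $1$) rather than of order $(1-\gamma)^{-1}$; with $\gamma=0.99$, $p=0.25$ your proposed $d=75$ versus the paper's $d\lesssim1$ is not a matter of looseness but of a different (and here incorrect) mechanism. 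Your explicit choice to absorb the $k$-dependence of $p^c_{k,i}$ "into the structural properties rather than the weights" is precisely the move that prevents the argument from closing.

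Two secondary omissions: the monotonicity of $TV$ is not free from the branch analysis alone --- the paper proves $V(k)\ge V(g_k(s))$ by a coupling argument using the line-deletion map $g_k$; and the qualitative fact that clique transmissions never increase the clique size (the paper's Claim~1, reused here) gives only the support of $p^c_{k,\cdot}$, not the quantitative estimates \eqref{23}--\eqref{24} that the invariance actually requires.
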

{\it Discussion.}
The main difficulty of the proof below stems from the ambiguity regarding the transition probabilities. That is, the precise calculation of these probabilities is computationally infeasible, especially for large number of users, $K$. We solved this by reinforcement learning on the practical side. On the analytical side, we make several assumptions and estimations, which we justify throughout the proof.
To this end, the proof is primarily built on the assumption that $V\in\calS$ and possesses all the corresponding properties. We exploit this assumption in order to prove that operator $T$, acting on $\calS$, {\it preserves} these properties, that is $TV\in\calS$.
Now note that the map defined by operator $T$ in~\eqref{4}, acting on a complete metric space $S$, with $T : \R^{|S|}\to\R^{|S|}$ of value functions, is a strict contraction, 
~\cite[Theorem V.18]{RS}. Therefore, $T$ has a unique fixed
point which solves $TU = U$. On the other hand, $V$ is the unique solution
to the same (Bellman) equation in the space of {\it all} functions. 
As a result, $V = U$. Whence, in case we start the converging procedure with initial function which preserves~\eqref{21} and~\eqref{19} , by iteratively activating the operator $T$, we end up with solution which preserves the aforementioned property.

%


\subsubsection{Proof of Lemma~\ref{l2}}\label{secA3}


Denote by $p^c_{k,i,j}$ the probability $p^c_{k,i}$, \textit{conditioned} that the largest fully disjoint clique with the clique of size $k$, prior the transmission, was of size $j$. Note that $j\leq k$. Denote the probability of having such a disjoint clique as $p_{k,j}$
(by total probability $p^k_{i}=\sum_jp^k_{i,j}p_{k,j}$. )

By Equation~\eqref{A1} and Lemma~\ref{lem5} (see the end of this section) it holds either $p_{k,i}^c=p^c_{k,i,0}+a_1= p^{i}(1-p)^{k-i}\binom{k}{i}+a_1$, for some nonnegative $a_1$, or $p_{k,i}^c=0$.
(Note, that $a_1=0$ in the case there were no other cliques of size $k-i$ prior to the encoded transmission.)


See that by multiple application of~\eqref{21} and~\eqref{19}
\ifdouble
{\small{
\begin{align}
& \tilde S_k\leq\gamma V_0+\gamma\sum_{i=0}^ki*p_{k,i}^cd+(1-p)k\no\\
&\;\;\;=\gamma V_0+\gamma\sum_{i=0}^ki*p_{k,i,0}^{c}d+(1-p)k+a_2(k) \no\\
& \;\;\;=\gamma V_0+\gamma pkd+(1-p)k+a_2(k)\label{23}
\end{align}
}}
\else
{\small{
		\begin{align}
		& \tilde S_k\leq\gamma V_0+\gamma\sum_{i=0}^ki*p_{k,i}^cd+(1-p)k \no\\
		&=\gamma V_0+\gamma\sum_{i=0}^ki*p_{k,i,0}^{c}d+(1-p)k+a_2(k) 
	=\gamma V_0+\gamma pkd+(1-p)k+a_2(k)\label{23}
		\end{align}
	}}
\fi
and
\ifdouble
{\small{
\begin{align}
& \tilde S_k\geq\gamma V_k-\gamma\sum_{i=0}^k(k-i)*p_{k,i}^cd+(1-p)k \no\\
&\;\;\;=\gamma V_k-\gamma\sum_{i=0}^k(k-i)*p_{k,i,0}^{c}d+(1-p)k-b_2(k) \no\\
& \;\;\;=\gamma V_k-\gamma(1-p)kd+(1-p)k-b_2(k)\label{24}
\end{align}
}}
\else
{\small{
		\begin{align}
		&\tilde S_k\geq\gamma V_k-\gamma\sum_{i=0}^k(k-i)*p_{k,i}^cd+(1-p)k \no\\
		&=\gamma V_k-\gamma\sum_{i=0}^k(k-i)*p_{k,i,0}^{c}d+(1-p)k-b_2(k)
	=\gamma V_k-\gamma(1-p)kd+(1-p)k-b_2(k)\label{24}
		\end{align}
\fi
where $a_2(k)$ and $b_2(k)$ stand for summations of all compensation constants $a_1(k,i)$, in both cases above.

We use the contraction property in the remaining part of the proof. Since, by assumption, $V$ satisfies~\eqref{21} and~\eqref{19}, we only have to show that
{\small{
\begin{align}
\label{18}
&\max\{\tilde S(k+1),\tilde E(k+1)\}-\max\{\tilde S(k),\tilde E(k)\}\leq d \\
&\max\{\tilde S(k-i),\tilde E(k-i)\}-\max\{\tilde S(k),\tilde E(k)\}\leq-i+c
\end{align}}}

We analyze all the possible options within the curly brackets, as follows.

{\it 1. }
{\small{
\begin{align*}
& TV(k+1)-TV(k)=\tilde S(k+1)-\tilde S(k)\\
& TV(k-i)-TV(k)=\tilde S(k-i)-\tilde S(k)
\end{align*}}}

Applying Lemma~\ref{lem4} it immediately follows that $TV(k+1)-TV(k)<d$ and $TV(k-i)-TV(k)\geq-i+c$ in this case.

{\it 2. }
{\small{
\begin{align*}
& TV(k+1)-TV(k)=\tilde E(k+1)-\tilde E(k)\\
& TV(k-i)-TV(k)=\tilde E(k-i)-\tilde E(k)
\end{align*}}}
In order to prove the second case we should comply with the expressions for $d$ and $c$ found in the first case.
Note that $ p^e_{k+1}< p^e_{k}$. That is, the probability to increase the size of the maximal clique then acting by sending an empty line decreases with the state size. Hence,
{\small{
\begin{align*}
&\tilde E(k+1)-\tilde E(k)=
 p^e_{k+1}\gamma V(k+2)+(1-p^e_{k+1})\gamma V(k+1)-
 p^e_{k}\gamma V(k+1)-(1-p^e_{k})\gamma V(k)\\&=
 p^e_{k+1}\gamma V(k+2)+(1-p^e_{k+1}-p^e_{k})\gamma V(k+1)
-(1-p^e_{k})\gamma V(k)\\&\leq
 \gamma dp^e_{k+1}+(1-p^e_{k})\gamma V(k+1)-(1-p^e_{k})\gamma V(k)\leq
 \gamma dp^e_{k+1}+d(1-p^e_{k})\gamma<d\gamma<d
\end{align*}}}
and
{\small{
\begin{align*}
&\tilde E(k-i)-\tilde E(k)=
 p^e_{k-i}\gamma V(k-i+1)+(1-p^e_{k-i})\gamma V(k-i)-
 p^e_{k}\gamma V(k+1)-(1-p^e_{k})\gamma V(k)\\ & \leq
 [p^e_{k-i}\gamma V(k-i+1)-p^e_{k-i})\gamma V(k-i)]+\gamma V(k-i)+
 [(1-p^e_{k})\gamma V(k+1)-(1-p^e_{k})\gamma V(k)]-\gamma V(k+1)\\& \leq
 \gamma dp^e_{k-i}+\gamma V(k-i)-(1-p^e_{k})d\gamma -\gamma V(k+1)\leq
 \gamma dp^e_{k-i}+(1-p^e_{k})d\gamma-\gamma i-\gamma+\gamma c<-i+c
\end{align*}}}
See that for $\gamma$ close enough to $1$ the last assertion is true.

{\it 3.  }
{\small{
\begin{align*}
& TV(k+1)-TV(k)=\tilde S(k+1)-\tilde E(k) \\
& TV(k-i)-TV(k)=\tilde S(k-i)-\tilde E(k)
\end{align*}}}

Using the proof of case {\it 1}:
{\small{
\begin{align*}
&\tilde S(k+1)-\tilde E(k)\leq\tilde S(k+1)-\tilde S(k)\leq d \\
&\tilde S(k-i)-\tilde E(k)\leq\tilde S(k-i)-\tilde S(k)\leq-i+c
\end{align*}}}

{\it 4. }
{\small{
\begin{align*}
& TV(k+1)-TV(k)=\tilde E(k+1)-\tilde S(k) \\
& TV(k-i)-TV(k)=\tilde E(k-i)-\tilde S(k)
\end{align*}}}

Using the proof of case {\it 2}:
{\small{
\begin{align*}
&\tilde E(k+1)-\tilde S(k)\leq\tilde E(k+1)-\tilde E(k)\leq d \\
&\tilde E(k-i)-\tilde S(k)\leq\tilde E(k-i)-\tilde E(k)\leq-i+c
\end{align*}}}

There are additional combinations, such as $\tilde E(k+1)-\tilde S(k)$ and
$\tilde S(k-i)-\tilde S(k)$, however their proof is straightforward using same considerations as above.
It is trivially seen that all the cases hold for the boundary conditions as well.

To see that $V(k)$ is non-decreasing in $k$ we use the following argumentation. Denote the aggregated state of having a maximal clique of size $k$ as $s_k$, $\bar s_k\in\bar S$.
Define function $g_k:s_k\to s_{k-1},\;\;k>1$, such that for each $s_k$, $g_k$ acts by deleting a random line from the maximal clique of size $k$, i.e. updating all entries of the chosen line to $0$. We aim to compare $V(s(k))=V(k)$ and $V(g_k(s(k)))$. By simple coupling argumentation one defines two processes and sees that $V(s(k))\geq V(g_k(s(k)))$. We skip the trivial details.
Finally the contraction property of operator $T$ follows from the well known results on MDP.
See~\cite{Puterman}, for example.
This accomplishes the proof of the lemma.
\qed

\begin{lemma}\label{lem5}
For $j>2$, that is disjoint clique exists,
{\small{
\begin{align*}
& p^c_{k,i,j}=0, \;\;\;\; j>i \\
& p^c_{k,i,0}\leq p^k_{i,j}, \; \; j\leq i \\
\end{align*}}}
\end{lemma}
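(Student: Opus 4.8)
The plan is to compute the conditional transition probabilities by reducing the effect of a coded clique transmission to two easily described combinatorial events: the number $m$ of clique members that fail to decode, and the fate of the largest vertex-disjoint clique. The key structural fact, which rests on the instantly decodable assumption, is that the transmission shrinks the transmitted clique while leaving any fully disjoint clique untouched, so that the post-transmission maximal clique has size exactly $i=\max(m,j)$. Both inequalities then follow by inspection.

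First I would describe the evolution of the transmitted clique of size $k$. Each member receives the XOR and decodes its own packet independently with probability $1-p$; upon success its row is zeroed, so by the row-zeroing convention no other user holds a packet destined to it, and having $s(a,\cdot)=0$ it cannot belong to any clique afterwards. If exactly $m$ members fail, their rows are left intact, so every pair $a,b$ of failing members still satisfies $s(a,b)=s(b,a)=1$ and they continue to form a clique, now of size exactly $m$. Zeroing the rows of the successful members only deletes edges, hence (as in Claim~\ref{lem6}) cannot create a new edge or enlarge any clique. The probability that exactly $m$ of the $k$ members fail is $\binom{k}{m}p^{m}(1-p)^{k-m}$.

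Next I would argue that a clique of size $j$ fully disjoint from the transmitted clique is preserved verbatim: its members are not addressees of the coded packet, cannot decode an XOR of foreign packets and therefore discard it, while the row-zeroing of the successful members touches none of its edges. Consequently the maximal clique after the transmission is $i=\max(m,j)$. Both assertions now follow. For the first, since $i=\max(m,j)\ge j$ holds deterministically, the event $i<j$ has zero probability, giving $p^c_{k,i,j}=0$ whenever $j>i$. For the second, condition on a disjoint clique of size $j\le i$: if $j<i$ then $\max(m,j)=i$ forces $m=i$, so the conditional probability equals $\binom{k}{i}p^{i}(1-p)^{k-i}=p^c_{k,i,0}$; if $j=i$ then $\max(m,j)=i$ only requires $m\le i$, so the conditional probability equals $\sum_{l=0}^{i}\binom{k}{l}p^{l}(1-p)^{k-l}\ge\binom{k}{i}p^{i}(1-p)^{k-i}=p^c_{k,i,0}$. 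In either case $p^c_{k,i,0}\le p^c_{k,i,j}$, which is the claimed inequality.

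The main obstacle is the careful justification that the post-transmission maximal clique is exactly $\max(m,j)$ rather than something larger arising from recombination. I must verify that the failing subset and the disjoint clique cannot fuse into one larger clique---they share no vertices, and the transmission creates no new edge bridging them---and that the successful members, whose rows vanish, drop out of every clique. All of this is a direct consequence of the model: stored side information is always uncoded, so coded transmissions never add edges, and row-zeroing is monotone decreasing on the edge set. Once this monotonicity is in place, the analysis collapses to the two binomial counts above and the inequalities are immediate.
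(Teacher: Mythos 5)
Your overall route is the same as the paper's: the first assertion follows because the fully disjoint clique survives the coded transmission untouched, and the second is reduced to the binomial law $\binom{k}{m}p^{m}(1-p)^{k-m}$ for the number $m$ of clique members that fail to decode. Your write-up is in fact considerably more explicit than the paper's own proof, which establishes \eqref{A1} and then disposes of the second assertion with nothing more than the normalization $\sum_{i=0}^{k}p^c_{k,i,j}=1$; your case split ($j<i$ gives equality with $p^c_{k,i,0}$, while $j=i$ gives $P(m\le i)\ge P(m=i)$) is presumably the argument the authors intend.

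However, the structural claim on which your computation rests --- that the post-transmission maximal clique has size exactly $\max(m,j)$ --- is false in general. You rule out fusion of the surviving failing subset with the disjoint clique, but you do not account for cliques that \emph{partially overlap} the transmitted one. Take $k=3$ with transmitted clique $\{1,2,3\}$ and a second clique $\{3,4,5\}$: the largest clique fully disjoint from $\{1,2,3\}$ is $\{4,5\}$, so $j=2$; if only user $3$ fails ($m=1$), rows $1$ and $2$ are zeroed but $\{3,4,5\}$ survives intact, and the new maximal clique has size $3>\max(1,2)$. This does not hurt the first assertion (survival of the disjoint clique already forces $i\ge j$), but it invalidates the exact identities $p^c_{k,i,j}=P(m=i)$ and $p^c_{k,i,j}=P(m\le i)$ on which your proof of the second assertion relies; note also that $p^c_{k,i,j}$ is really a policy-dependent average over all detailed states in the conditioning event, so no clean closed form is available. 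To be fair, the paper shares this weakness --- its equation \eqref{A1} asserts $p^c_{k,i,0}=\binom{k}{i}p^{i}(1-p)^{k-i}$ as an identity and sweeps the surplus from overlapping cliques into the nonnegative correction $a_1$ only later, in the proof of Lemma~\ref{l2} --- but where the paper openly frames these steps as ``assumptions and estimations,'' your proposal presents them as exact. To close the gap you would need either to condition on the full detailed state, or to argue directly that additional surviving cliques can only shift probability mass toward larger $i$, which is all the second assertion actually requires.
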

\begin{proof}
Trivially, in case the disjoint clique is larger than $j$, the probability to have clique smaller than $j$ is zero. Therefore, the first assertion trivially holds, $p^c_{k,i,j}=0 \;\; j>i$.

Next, see that for all $i$,
{\small{
\begin{equation}\label{A1}
p^c_{k,i,0}= p^{i}(1-p)^{k-i}\binom{k}{i}.
\end{equation}}}
The sum of all transition probabilities from state $k$ acting $\bar a=1$, for all $j$ is $1$:
{\small{\[
\sum_{i=0}^kp^c_{k,i,j}=1
\]}}
Hence, the second assertion holds.
\end{proof}
\begin{lemma}\label{lem4}
One has constants $d$ and $c$ such that
{\small{
\begin{align*}
&\tilde S(k+1)-\tilde S(k)<d \\
&\tilde S(k-i)-\tilde S(k)>-i+c
\end{align*}}}
For all $k$ and $i<k$.
\end{lemma}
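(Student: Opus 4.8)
The plan is to prove both inequalities by a coupling argument applied directly to $\tilde S(k)=\gamma\sum_{i=0}^k p^c_{k,i}V(k-i)+(1-p)k$, rather than through the looser bounds~\eqref{23}--\eqref{24}. Note first that both assertions are upper bounds on increments of $\tilde S$: the first is $\tilde S(k+1)-\tilde S(k)<d$ outright, and the second is equivalent to $\tilde S(k)-\tilde S(k-i)<i-c$. Hence only the slope bound~\eqref{21} on $V$ will be needed. The mechanism behind $p^c_{k,i}$ is the one used in Claim~\ref{lem6}: when a size-$k$ clique is transmitted, each member decodes independently with probability $1-p$, and a member that fails still holds every other member's packet and therefore remains in a clique. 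Thus, up to the disjoint-clique correction isolated in Lemma~\ref{lem5}, the successor maximal-clique size equals the number of failures, distributed as $\mathrm{Bin}(k,p)$, while the expected reward is the expected number of successes $(1-p)k$.

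First I would couple the size-$(k+1)$ and size-$k$ transmissions by reusing the decoding outcomes of the $k$ common members. The successor state $\text{new}_{k+1}$ of the larger clique then exceeds the common successor state $N=\text{new}_k$ by the failure indicator $G\in\{0,1\}$ of the single extra member, with $\E[G]=p$ and $G$ independent of $N$. Consequently $\tilde S(k+1)-\tilde S(k)=\gamma p\,\E[V(N+1)-V(N)]+(1-p)$, and bounding the inner increment by~\eqref{21} gives $\tilde S(k+1)-\tilde S(k)\le\gamma p d+(1-p)$, which is strictly below $d$ exactly when $d>\frac{1-p}{1-\gamma p}$.

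For the second inequality I would couple the size-$k$ and size-$(k-i)$ transmissions on their $k-i$ common members, so that the successor states differ by $G\sim\mathrm{Bin}(i,p)$ with $\E[G]=pi$. Telescoping $V(N+G)-V(N)\le G\,d$ and applying~\eqref{21} termwise yields $\tilde S(k)-\tilde S(k-i)\le\gamma d\,pi+(1-p)i=i(\gamma dp+1-p)$, i.e. $\tilde S(k-i)-\tilde S(k)\ge -i+ip(1-\gamma d)$. Taking $d<1/\gamma$ makes $p(1-\gamma d)>0$, and then any $c\in(0,p(1-\gamma d))$ gives the required strict bound uniformly in $i\ge1$, the case $i=1$ being the binding one. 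The two constraints $\frac{1-p}{1-\gamma p}<d<\frac1\gamma$ are simultaneously satisfiable because $\frac{1-p}{1-\gamma p}<\frac1\gamma$ reduces to $\gamma<1$; the boundary states $k=0,K$ of~\eqref{4} are verified identically.

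The main obstacle is the disjoint-clique correction, i.e. the nonnegative terms $a_1$ from~\eqref{A1} and Lemma~\ref{lem5} that the exact $p^c_{k,i}$ carries on top of the binomial term. These make the successor state stochastically larger than the pure failure count and can break the clean identity $\text{new}_k=\text{new}_{k-i}+G$ precisely when a disjoint clique forms among the $i$ extra members, which inflates $\tilde S(k)-\tilde S(k-i)$ and works against the second inequality. I would control this with a monotone coupling together with the second estimate of Lemma~\ref{lem5}, $p^c_{k,i,0}\le p^{k}_{i,j}$, which guarantees the correction is dominated by the binomial contribution and vanishes when no disjoint clique is present; the residual is absorbed into the compensation constants $a_2,b_2$ of~\eqref{23}--\eqref{24}. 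The delicate point is that the admissible window for $(d,c)$ narrows as $\gamma\uparrow1$, so the correction must be shown to shrink at least as fast as $\frac1\gamma-\frac{1-p}{1-\gamma p}$; establishing this quantitatively is the crux of a fully rigorous proof.
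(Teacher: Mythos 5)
Your binomial core is correct, and it is a genuinely different argument from the paper's. The paper proves the lemma by first establishing the endpoint-anchored bounds~\eqref{23}--\eqref{24} (which use the slope bounds~\eqref{21} and, via the $\gamma V_0-\gamma V_{k-1}$ term, also~\eqref{19}), then forming the increments, \emph{treating the two inequalities as equalities} and solving the resulting $2\times2$ linear system for $d$ and $c$, obtaining~\eqref{25c}--\eqref{25d}, in which $d$ and $c$ formally depend on $i$, $k$ and the compensation terms $a_2(k-i),b_2(k)$; it then argues only asymptotically (via $1/A\simeq ip-k$ as $\gamma\to1$) that $d$ is ``essentially'' constant and $c$ small and positive. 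Your coupling of the failure counts works directly on the increments, needs only the one-sided bound~\eqref{21}, and---on the idealized dynamics where the successor clique size is exactly the $\mathrm{Bin}(k,p)$ failure count---yields genuinely $k$- and $i$-independent constants with the transparent window $\frac{1-p}{1-\gamma p}<d<\frac{1}{\gamma}$ and $c\in\bigl(0,p(1-\gamma d)\bigr)$, correctly identifying $i=1$ as the binding case; this decoupling (the constraint on $d$ is self-contained, $c$ chosen afterwards) is cleaner than the paper's entangled system, where $c$ appears in the $d$-equation and vice versa. On the one genuinely hard point---the disjoint-clique corrections $a_1$ of~\eqref{A1} and Lemma~\ref{lem5}---you and the paper are in essentially the same position: you flag the needed quantitative control as unresolved, while the paper carries $a_2,b_2$ symbolically into the solved constants and then dismisses them with a plausibility claim (``essentially independent of $i$ and $k$,'' ``$c$ has very low positive values'') rather than a bound. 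Your further observation that the admissible window has width $\frac{1-\gamma}{\gamma(1-\gamma p)}$ and hence collapses as $\gamma\uparrow1$, so that the correction terms must decay at the same rate to be absorbable, is sharper than anything the paper states and pinpoints exactly where its argument is heuristic; so your route is correct where the paper's is correct, simpler in its use of the hypotheses, and explicit about the residual gap that the paper papers over.
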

\begin{proof}
We prove by finding such constants.
Substitute~\eqref{21} and~\eqref{19}, using inequalities~\eqref{23} and~\eqref{24}, and perform algebraic simplifications. Write
{\small{
\begin{align*}
&\tilde S(k)-\tilde S(k-1)=
\gamma\sum_0^{k}p^{c}_{k,i}V_i+(1-p)(k)-\gamma\sum_0^{k-1}p^c_{k-1,i}V_i+(1-p)(k-1)\\&\leq
\gamma V_0+\gamma kdp+(1-p)(k)+a_2(k)
-\gamma V_{k-1}+(1-\gamma d)(1-p)(k-1)+b_2(k-1)\\&\leq
 d_k\gamma k+d\gamma p-d\gamma-p+1+a_2(k)+b_2(k-1)+ \left( 1-k \right) \gamma+c\gamma
\leq d
\end{align*} }}
and
{\small{
\begin{align*}
&\tilde S(k-i)-\tilde S(k)\leq
\gamma V_0+\gamma pd(k-i)+(1-p)(k-i)+a_2(k-i)
-\gamma V_k+(1-\gamma d)(1-p)k+b_2(k)\\&\leq
-d\gamma ip+d+k\gamma\,k+pi+\gamma c-\gamma i-k+a_2(k-i)+b_2(k)
\leq-i+c
\end{align*}}}


Next, for simplicity, assume equalities for both inequalities above and write
\[
\begin{cases}
 d\gamma\,k+d\gamma\,p-d\gamma-p+1+ \left( 1-k \right) \gamma+c\gamma=d \\
-d\gamma\,ip+d+k\gamma\,k+pi+\gamma c-\gamma i-k+a_2(k-i)-b_2(k)\\
\;\;\;\;\;=-i+c
\end{cases}
\]
Solving for $d$ and $c$ we have the following expressions
{\small{
\begin{align}
& c=A( \gamma\,k+\gamma\,p-\gamma-1) b_2(k)-A(\gamma\,k+\gamma\,p-\gamma-1)a_2(k-i)\no\\
& +A\big(p( {\gamma}^{2}ik-{\gamma}^{2}i+{\gamma}^{2}k-\gamma\,ik-\gamma
\,k+i)
\big)\label{25c} \\
& d=A\gamma a_2(k-i)-A\gamma b_2(k)+A(\gamma\,ip-{\gamma}^{2}-\gamma\,k+\gamma\,p-p+1) \label{25d}
\end{align}}
}
Where $1/A={\gamma}^{2}ip+{\gamma}^{2}p-{\gamma}^{2}-\gamma\,k-\gamma\,p+1$.
Observe that $1/A\backsimeq ip-k$ as $\gamma\to1$. The rightmost part of $d$ in~\eqref{25d} is essentially independent of $i$ and $k$, and is less than $1$ for all $k,i$. Consequently, the assumption $d$ is independent of $k$ is plausible.
One the other hand, $c$ has very low positive values, comparatively to that of $i$.
Hence, the constants $d$ and $c$ above satisfy the lemma.
\end{proof} 

\fi
\bibliographystyle{IEEEtran}
\bibliography{ncmdp91,ncmdp_review}

\end{document}